\title{Inferring Fences in a Concurrent Program Using SC proof of Correctness}
\author{Chinmay Narayan, Shibashis Guha, S.Arun-Kumar }
\institute{Indian Institute of Technology Delhi}
\newcommand{\program}{\mathsf{Program}}
\newcommand{\command}{\mathsf{Cmd}}
\newcommand{\tid}{\mathsf{Tid}}
\newcommand{\dom}{\mathsf{dom}}
\newcommand{\finish}{\mathbf{end}}
\newcommand{\shvar}{\mathsf{shvar}}
\newcommand{\locvar}{\mathsf{\ell var}}
\newcommand{\locexp}{\mathsf{\ell exp}}
\newcommand{\shexp}{\mathsf{shexp}}
\newcommand{\val}{\mathsf{val}}
\newcommand{\Assign}[2]{\mathrm{#1}\mathbf{:=}\mathrm{#2}}
\newcommand{\var}[1]{\mathrm{#1}}
\newcommand{\ifthenelse}[3]{\mathbf{if}\ {#1}\ \mathbf{then}\ {#2}\ \mathbf{else}\ {#3}}
\newcommand{\while}[2]{\mathbf{while}\ {#1}\ \mathbf{do}\ {#2}\ \mathbf{od}}                
\renewcommand{\skip}{\mathbf{skip}}
\newcommand{\sync}{\mathbf{fence}} 
\newcommand{\join}[1]{\mathbf{join}(#1)}
\renewcommand{\exp}{\mathsf{exp}}
\newcommand{\defeq}{\stackrel{def}{=}}
\newcommand{\scconfig}[2]{({#1},{#2})}
\newcommand{\G}{\mathcal{G}}
\newcommand{\T}[3]{\set{(#1,#2,#3)}}
\renewcommand{\L}{\mathcal{L}} 
\newcommand{\TStore}{\mathsf{Tstore}}
\newcommand{\sctrans}{\rightarrow}   
\newcommand{\denot}[2]{\llbracket #1 \rrbracket_{#2}}
\newcommand{\rmconfig}[2]{({#1},{#2})}
\renewcommand{\L}{\mathcal{L}} 
\newcommand{\B}{\mathcal{B}}
\newcommand{\rmtrans}{\stackrel{M}{\rightarrow}}   
\newcommand{\denotrm}[2]{\llbracket #1 \rrbracket^{rm}_{#2}}
\newcommand{\subst}[3]{#1[{#2}/{#3}]}
\newcommand{\Last}[1]{\mathsf{Last(#1)}}
\newcommand{\w}[2]{(!{\mathrm{#1}},\mathrm {#2})}
\renewcommand{\r}[2]{(?{\mathrm{#1}},\mathrm{#2})}
\newcommand{\true}{\mathsf{true}}
\newcommand{\false}{\mathsf{false}}
\newcommand{\set}[1]{\{#1\}}
\newcommand{\restr}[2]{#1\downharpoonleft_{#2}}
\newcommand{\len}[1]{|{#1}|}
\renewcommand{\Last}{\mathbf{Last}}
\newcommand{\None}{\mathbf{None}}
\newcommand{\Assn}{\mathsf{Assn}}
\newcommand{\HistAssn}{\mathsf{HAssn}}
\newcommand{\HA}{\mathsf{HA}}
\newcommand{\LStateAssn}{\mathsf{LSAssn}}
\newcommand{\LA}{\mathsf{LA}}
\newcommand{\ph}{\mathbf{ph}}
\newcommand{\State}[2]{\mathsf{S}(#1,#2)}
\newcommand{\Assndenot}[1]{\llbracket#1\rrbracket}
\newcommand{\Histdenot}[1]{{\textcolor{red}{\llbracket}#1\textcolor{red}{\rrbracket}}}
\newcommand{\Lstdenot}[2]{{\textcolor{green}{\llbracket}#1\textcolor{green}{\rrbracket}_{#2}}}
\newcommand{\bind}{\mathsf{bind}}
\newcommand{\bindsubst}[2]{_{#2}{#1}}
\newcommand{\any}{\_}
\newcommand{\Let}{\mathbf{Let~}}
\newcommand{\In}{\mathbf{~in}}
\newcommand{\inpar}[1]{\left(\begin{array}{@{}l@{}}#1\end{array}\right)}
\newcommand{\inarr}[1]{\begin{array}{@{}l@{}}#1\end{array}}
\newcommand{\CWhile}[1]{\textcolor{blue}{\mathbf{while}}(#1)~\textcolor{blue}{\mathbf{do}}}
\newcommand{\COd}{\textcolor{blue}{\mathbf{od}}}
\newcommand{\assh}[1]{\mathrm{\textcolor{magenta}{#1}}}
\newcommand{\CS}{\mathrm{\textcolor{red}{\fbox{Critical~Section}}}}
\newcommand{\sou}{\cite{Soundararajan198413}~}
\newcommand{\HOARE}[3]{\{{#1}\}\ #2\ \{{#3}\}}
\newcommand{\h}[1]{\mathsf{h_{#1}}}
\newcommand{\Proc}[1]{\mathsf{Proc}_{#1}}
\newcommand{\limp}{\Rightarrow}
\newcommand{\fence}{\mathbf{fence}}
\newcommand{\compat}{\mathsf{Compat}}
\newcommand{\before}{\prec}
\newcommand{\lock}{\mathrm{lock}}
\newcommand{\unlock}{\mathrm{unlock}}
\newcommand{\lab}[1]{{\scriptsize{#1.}}~~}
\newcommand{\Data}[1]{\mathrm{D^{#1}}}
\newcommand{\Seq}[1]{\mathrm{S_{#1}}}
\begin{document}
\maketitle
\begin{abstract}
Most proof systems for concurrent programs \cite{Jones83a} \cite{csl} \cite{rgsep} assume the underlying 
memory model to be sequentially consistent(SC), an assumption which does not hold for modern multicore processors. 
These processors, for performance reasons, implement relaxed memory models. As a result
of this relaxation a program, proved correct on the SC memory model, might execute incorrectly.
To ensure its correctness under relaxation, $\fence$ instructions are inserted in the code.
In this paper we show that the SC proof of correctness of an algorithm, carried out in the proof system of \cite{Soundararajan198413}, 
identifies per-thread instruction orderings \emph{sufficient} for this SC proof. 
Further, to correctly execute this algorithm on an underlying relaxed memory model it is sufficient to 
respect only these orderings by inserting $\fence$ instructions.
\end{abstract}

\section{Introduction}

The memory model of a processor defines the order in which memory operations, issued by a single processor, appear to execute from the point of view of the memory subsystem. 
In a broad sense, it determines whether any two memory access instructions issued by a processor within a single thread can be reordered. 
Sequentially Consistent memory model (SC) is the simplest but most restrictive of all and does not allow any reordering of instructions within a thread.
Modern multicore processors, for the purpose of hiding latencies, implement relaxed memory models and allow instructions within a thread to be reordered as long 
as they operate on different memory addresses.
For example, Total Store Order (TSO), the memory model for x86 processors, allows write instructions to get reordered with later 
reads provided they operate on different memory locations. 
As a result of these relaxations, a program may exhibit more behaviours than under SC
and it is possible that some of these extra behaviours do not satisfy the property which holds under SC.
Peterson's mutual exclusion algorithm, in Figure \ref{peterson} illustrates this behaviour.
This algorithm satisfies mutual exclusion property under the SC model but executing it on an Intel's x86 processor might result in the violation of this property.
This can happen if the read of flag$_2$ at label $3$ is reordered before instructions at label $1$ and label $2$. With such reordering $\Proc{1}$ can enter the critical section.
$\Proc{2}$, with or without this reordering, can also enter the critical section simultaneously. This example clearly shows the effect of memory model on the correctness of an algorithm.

\newcommand{\flag}{\var{flag}}
\newcommand{\turn}{\var{turn}}
\begin{figure}[h]
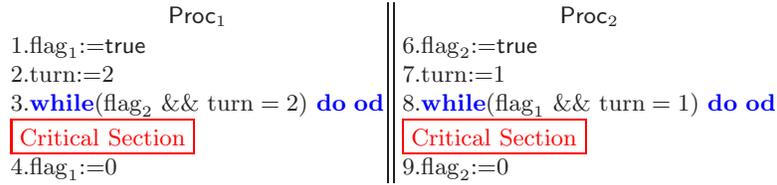

\centering
$\begin{array}{c||@{~}c}
\Proc{1} & \Proc{2}\\
\inarr{
1. \Assign{\flag_1}{\true}\\
2. \Assign{\turn}{2}\\
3. \CWhile{\flag_2~ \&\&~ \turn=2} 
   ~\COd\\
 \CS\\
4. \Assign{\flag_1}{0}\\
}&
\inarr{
6. \Assign{\flag_2}{\true}\\
7. \Assign{\turn}{1}\\
8. \CWhile{\flag_1~ \&\&~ \turn=1}
   ~\COd\\
 \CS\\
9. \Assign{\flag_2}{0}\\
}  
\end{array}$
\caption{Peterson's mutual exclusion algorithm}
\label{peterson}
\end{figure}
It is clear that this problem appeared because of an extra execution
generated due to instructions' reordering which was not possible under the SC memory model. 
This problem also does not appear for a data race free program \emph{if} the underlying relaxed memory model satisfies data-race freedom (DRF) property.
It can be shown that any data race free program when executed on a memory model satisfying DRF 
exhibits exactly the same set of behaviours as under the SC memory model. 
A program is \textit{data race free} if in every execution of this
program any pair of an conflicting instructions ($w\mbox{-}w$ or $w\mbox{-}r$) by two different threads to the same variable are separated by an $\unlock$ instruction. 
Therefore a \textit{data race free} and a \textit{correct} program under SC is guaranteed 
to execute correctly on a memory model which satisfies DRF property. 
However, the algorithms that we are interested in (lock-free, wait-free, lock implementations) do not use $\lock/\unlock$ and hence do not fit under this definition of \textit{data race freedom}. Therefore, their correctness under a relaxed memory model does not follow from their correctness under the SC memory model.

Another way to avoid extra executions is to prevent certain reorderings by putting a special instruction, $\fence$, after every instruction in each thread. 
A $\fence$ instruction when placed between any two instructions in a thread
prohibits their reordering. Execution of a fully fenced program ($\fence$ after every instruction in a thread)
on a relaxed memory model generates exactly the same set of executions as under SC and therefore the correctness under SC implies the correctness under relaxed memory model.
However, this trivial placement strategy would negate the performance benefits associated with relaxed memory models. 
Therefore, an ideal placement of $\fence$ instructions should preserve 
only those program orders which are \textit{sufficient} to prove the correctness of the properties of interest.

In this paper, we deal with parallel programs which satisfy some property under SC 
but are not race-free. 
The main contribution of this paper is to show that the proof of correctness of these programs under SC is useful in identifying
per-thread instruction orderings sufficient to make this program correct on a relaxed memory model. Further, locations of $\fence$ 
instructions can be inferred based on these orderings and the underlying memory model.
We are not aware of any other attempt to use the SC proof of correctness for \textit{fence inference} in a concurrent program.

 \section{Related Work}
\label{sec:related}

All existing approaches for inferring fences for relaxed memory models can be divided into two main categories; model checking based approaches 
\cite{Huynh:2007:MMS:1315662.1315668} 
\cite{Kuperstein:2011:PAR:1993498.1993521} 
\cite{Linden:2011:VAM:2032692.2032707} 
\cite{DBLP:dblp_conf/tacas/AbdullaACLR12} and proof system based approaches \cite{Ridge} \cite{bornat2012barrier} \cite{bornat2012abducing}. Model checking based approaches first explore the state space of 
a program under a given memory model using buffer based operational semantics and check the reachability of erroneous states.
Once a reachable erroneous state is identified, the path leading to this state is restricted by inserting fences at appropriate places. \cite{DBLP:conf/esop/AtigBBM12} and \cite{Atig:2010:VPW:1707801.1706303} showed that the state reachability problem for TSO and PSO memory models is decidable for finite state programs. 
Further, this problem becomes undecidable as soon as the \textit{read after write} reordering is added to the memory model. This approach, by its nature, is better suited to programs
with finite data domains. We are aware of only one line of work 
\cite{DBLP:conf/sas/AbdullaACLR12} which combines predicate abstraction and model checking based approach to verify and correct infinite data 
domain programs like Lamport's bakery algorithm. 

The second approach is to use a memory model specific proof system as done in \cite{Ridge}. \cite{Ridge} presents a separation logic  based proof system for the TSO memory model and shows that Simpson's 4 slot algorithm does not satisfy the interference freedom property. Recently \cite{bornat2012barrier} and \cite{bornat2012abducing} looked at the use of separation logic derived proof system for verifying concurrent data structures on POWER/ARM based memory models. These memory models are more complex than the TSO or the PSO memory model mainly because of non-atomic writes. 
Unlike the approaches of \cite{Ridge}, \cite{bornat2012barrier} and \cite{bornat2012abducing}, we do not propose a memory model specific proof system but only look at the proof of correctness under SC memory model and use it to infer \textit{sufficient} orderings required for the correctness. Unlike model checking approaches, proof system based approaches can cover more than just reachability. This is evident in the example of Simpson's 4 slot algorithm where apart from the interference freedom we also prove that the sequence of values observed by the reader are consistent, i.e. they form a stuttering sequence of the values written by the writer. We are not aware of any line of work which handled the fence inference in Simpson's 4 slot algorithm under PSO memory model with respect to the interference freedom and the consistent reads properties. 


\section{Language: Syntax and Semantics}
\label{sec:lang}
\newcommand{\langsyntax}{
\begin{figure}[h]
 \begin{align*}
  \program &= \command;\finish\\
  \command &= \Assign{\shvar}{\locexp} \mid \Assign{\locvar}{\locexp} \mid \Assign{\locvar}{\shexp} \mid \command_1;\command_2 \mid \program_1 \| \cdots \| \program_n \\
           & \mid \ifthenelse{\locexp}{\command_1}{\command_2} \mid \while{\locexp}{\command} \mid \skip \mid \sync\\
  \shexp &= \shvar \mid \shvar \oplus \locexp \\
  \locexp &= \locvar \mid \locexp \oplus \locexp \mid \val \\
  \val    &= \mathbb{N} \mid \mathbb{B} \mid \mathbb{T} \\
  \oplus &= + \mid - \mid \times \mid \div
 \end{align*}
\caption{A simple concurrent programming language with barrier support}
\label{fig:langsyntax}
\end{figure}
}

\newcommand{\refsemantics}{
\begin{figure*}[h]
$\begin{array}{l@{~}|@{~}r}
\begin{minipage}{0.57\textwidth}
\small
\infrule[GW]{
C = \Assign{\shvar}{\locexp};C' \\ \denot{\locexp}{\G,\L}=v \quad \G'=\G[\shvar:=v]
}{
\scconfig{\G}{\T{t}{\L}{C}\cup T} \sctrans \scconfig{\G'}{\T{t}{\L}{C'}\cup T}\\
}
\vspace{.2em}

\infrule[LRW]{
C=\Assign{\locvar}{\locexp};C' \\ \denot{\locexp}{\G,\L}=v \quad \L'=\L[\locvar:=v]
}{
\scconfig{\G}{\T{t}{\L}{C}\cup T} \sctrans \scconfig{\G}{\T{t}{\L'}{C'}\cup T}
}

\vspace{.2em}

\infrule[Join]{
C=\join{\locvar};C' \\ \denot{\locvar}{\G,\L}=t'\in \tid \quad t'\not\in \dom(T) \cup \set{t}
}{
\scconfig{\G}{\T{t}{\L}{C}\cup T} \sctrans \scconfig{\G}{\T{t}{\L}{C'}\cup T}
}

\vspace{.2em}

\infrule[ITE-T]{
\denot{\locexp}{\G,\L}=\true \\ C=\ifthenelse{\locexp}{\command_1}{\command_2};C' 
}
{
\scconfig{\G}{\T{t}{\L}{C}\cup T} \sctrans \scconfig{\G}{\T{t}{\L}{\command_1;C'}\cup T} 
}

\vspace{.2em}

\infrule[WHL-T]{
\denot{\locexp}{\G,\L}=\true \\ C=\while{\locexp}{\command};C'
}
{
\scconfig{\G}{\T{t}{\L}{C}\cup T} \sctrans \scconfig{\G}{\T{t}{\L}{\command;C}}
}
\end{minipage}&
\begin{minipage}{0.52\textwidth}
\small
\infrule[GR]{
C=\Assign{\locvar}{\shexp};C' \\ \denot{\shexp}{\G,\L}=v \quad \L'=\L[\locvar:=v]
}{
\scconfig{\G}{\T{t}{\L}{C}\cup T} \sctrans \scconfig{\G}{\T{t}{\L'}{C'}\cup T}
}
\vspace{1em}

\infrule[END]{
C=\finish
}{
\scconfig{\G}{\T{t}{\L}{C}\cup T} \sctrans \scconfig{\G}{T}
}
\vspace{1em}

\infrule[SKP-SYC]{
C\in\set{\skip;C',\sync;C'}
}{
\scconfig{\G}{\T{t}{\L}{C}\cup T} \sctrans \scconfig{\G}{\T{t}{\L}{C'} \cup T}
}

\vspace{.2em}
\infrule[ITE-F]{
\denot{\locexp}{\G,\L}=\false \\ C=\ifthenelse{\locexp}{\command_1}{\command_2};C' 
}
{
\scconfig{\G}{\T{t}{\L}{C}\cup T} \sctrans \scconfig{\G}{\T{t}{\L}{\command_2;C'}\cup T} 
}

\vspace{.2em}

\infrule[WHL-F]{
\denot{\locexp}{\G,\L}=\false \\ C=\while{\locexp}{\command};C'
}
{
\scconfig{\G}{\T{t}{\L}{C}\cup T} \sctrans \scconfig{\G}{\T{t}{\L}{C'}}
}
\end{minipage}
\end{array}$
\infrule[PARCOMP]{
T'=T\cup \set{\T{t_1}{\emptyset}{\program_1},\cdots,\T{t_n}{\emptyset}{\program_n}} \\
t_1,\cdots,t_n \not\in \dom(T)\cup\set{t}\\
C=\program_1\|\cdots\|\program_n ;C' 
}{
\scconfig{\G}{\T{t}{\L}{C}\cup T} \sctrans \scconfig{\G}{\T{t}{\L}{\join{t_1};\cdots;\join{t_n};C'}\cup T'}
}

\caption{Reference semantics (SC) for programming language of Figure \ref{fig:langsyntax}}
\label{fig:refsem}
\end{figure*}
}

\langsyntax
Figure \ref{fig:langsyntax} shows the syntax of a simple parallel programming language without the support of 
dynamic thread creation. Operator $\|$ is used to compose a finite number of programs in parallel. We explicitly distinguish 
local variables, ranged over by $\locvar$ and accessed only within a thread, and shared variables, ranged over by $\shvar$ and 
accessed by more than one thread. A local expression, ranged over by $\locexp$, is constructed using only local variables, values 
and operators. A shared expression, ranged over by $\shexp$, is constructed from \textit{exactly one} shared variable, 
another local expression and operators. Assignment command only allows assigning a local expression to a local variable, 
a shared expression to a local variable or a local expression to a shared variable.
Assignment of a shared expression to a shared variable can be broken down into assignments of one of the above forms. Because of 
this restriction every assignment command either reads at most one shared variable or writes to at most one 
shared variable but not both. This guarantees at most one memory load or store event per assignment expression which is 
helpful in reasoning about memory model and associated events.

\refsemantics

\subsection{SC Semantics}
Figure \ref{fig:refsem} shows the semantics of this language under SC. A state or configuration under SC 
is of the form $\scconfig{\G}{\TStore}$ where
\[
\begin{array}{lcl}
 \G \defeq \shvar \to \val
,&~~
\TStore \defeq \tid \to \L \times \program
,&~~
\L \defeq \locvar \to \val 
\end{array}
\]
Local store $\L$ and global store $\G$ maps local and shared variables respectively to their values.
Each thread is represented by a unique thread id.
Thread store maps a thread id to its local store and the program to be executed next.
In our operational semantics we use the set representation of this function, i.e $\TStore$ as a set of tuples of the form $(t,\L,C)$.
Function $\denot{-}{\G,\L} \in \exp \to \G \to \L \to \val$ such that
\[
\begin{array}{lcl}
 \denot{\shvar}{\G,\L} = \G(\shvar)
,&~~
\denot{\locvar}{\G,\L} = \L(\locvar)
,&~~
\denot{e_1\oplus e_2}{\G,\L} = \denot{e_1}{\G,\L} \oplus \denot{e_1}{\G,\L} 
\end{array}
\]
takes an expression 
and evaluates it to a value based on the mapping of variables in global store $\G$
and thread local store $\L$. This function is then used to define reference semantics in Figure \ref{fig:refsem}.
For any function $\mathcal{F}:A\to B$, $a'\in A,b'\in B$, we write $\mathcal{F}[a':=b']$ to denote a function which is same as $\mathcal{F}$ everywhere except at $a'$ where it evaluates to $b'$.
Semantic rules corresponding to the conditional and the looping constructs ($\mathrm{ITE\mbox{-}T,ITE\mbox{-}F,WHL\mbox{-}T,WHL\mbox{-}F}$), local variable's read write ($\mathrm{LRW}$) and global variable's read write ($\mathrm{GR,GW}$) are quite straightforward. In the parallel composition command, the parent thread stops its execution and waits for all children threads to finish their execution. This is achieved by adding a $\join$ command in the parent thread for each spawned thread. The rule for $\join{tid}$ command ensures that this command is executed when the thread corresponding to $tid$ has finished its execution. This also ensures that the parent thread waits for the completion of children threads before continuing further. $\sync$ and $\skip$ are like no-op under SC semantics. Following the syntax of Figure \ref{fig:langsyntax} one process $\Proc{i}$, or thread, can only execute one program $\program_i$. Therefore we sometime use $\Proc{i}$ and $\program_i$ interchangeably to mean the same thing. 
\section{Logic}
\label{sec:logic}

\newcommand{\compnonrec}
{
\begin{figure}[h]
$\begin{array}{ll}
\small
 \mathsf{Compat}(v_1,&\cdots,v_m,\h{\Proc{1}},\cdots,\h{\Proc{n}}) \defeq \\
&\begin{cases}
\shvar_1=v_1\ \wedge\ \cdots\ \shvar_m=v_m & \mbox{if } \h{\Proc{i}}=\epsilon,~i=1\cdots n
\\\exists h \in Merge(\h{\Proc{1}},\cdots,\h{\Proc{n}}).~& \mbox{otherwise} 
\\~~~ \forall j \le m. \shvar_j = f_j(v_j,h)\
\\~~~~~\land \forall k \le \len{h}.~ h[k]=\r{\shvar_j}{\ph_j} \limp  
\\~~~~~~\ph_j = f_j(v_j,h[1:k-1]), 
\end{cases}
\end{array}$\\
\caption{Non-recursive definition of $\mathsf{Compat}$ predicate}
\label{fig:compnonrec}
\end{figure}
}
\newcommand{\mergef}
{
\begin{figure}[h]
$\begin{array}{ll}
\small
\mathsf{Merge}&(\h{\Proc{1}},\cdots,\h{\Proc{n}}) \defeq\\  
&\begin{cases}
\{\epsilon\} & \mbox{if } \h{\Proc{i}}=\epsilon, i=1\cdots n\\
\{h\ |\ \exists i\le n.~ h_i \neq \epsilon\ \land h[1] = \h{\Proc{i}}[1]\\
~~~~\land \exists h' \in Merge(\h{\Proc{1}},\cdots,\h{\Proc{i-1}},rest(\h{\Proc{i}}),\cdots,\h{\Proc{n}})\\
~~~~\land h'=rest(h)\} & \mbox{otherwise}
\end{cases}
\end{array}$
\caption{Function $Merge$ generates the set of all possible interleaving of $\h{\Proc{1}}$ to $\h{\Proc{n}}$}
\end{figure}
}
\newcommand{\funf}
{
\begin{figure}[h]
\small
$\begin{array}{l}
 f_j(v,h) \defeq 
      \begin{cases} 
         v & \mbox{if } h=\epsilon \\ 
         f_j(v',rest(h))  & \mbox{if } first(h)=\w{\shvar_j}{v'} \\
         f_j(v,rest(h)) & \mbox{otherwise} 
      \end{cases} 
 \end{array}$
 \caption{Function to get the last value written to variable $\shvar_j$ in history $h$}
\end{figure}
}

\newcommand{\comprec}{
\begin{figure}[h]
\small
$\begin{array}{ll}
 \mathsf{Compat}(v_1,&\cdots,v_m,\h{\Proc{1}},\cdots,\h{\Proc{n}}) \defeq \\
&\begin{cases}
\shvar_1=v_1\ \wedge\ \cdots\ \shvar_m=v_m & \mbox{if } \h{\Proc{i}}=\epsilon,~i=1\cdots n
\\ \exists i,j.~ \fst(\h{\Proc{i}})=\r(\shvar_j,\ph_j) \land \mathsf{Compat}(v_1,\cdots,v_m,\h{\Proc{1}},\cdots,\rst{\h{\Proc{i}}},\cdots,\h{\Proc{n}}).~& \mbox{otherwise} 
\\~~\lor \exists i,j,v_j'~ \fst(\h{\Proc{i}})=\w(\shvar_j,v_j') \land \mathsf{Compat}(v_1,\cdots,v_j',\cdots,v_m,\h{\Proc{1}},\cdots,\rst{\h{\Proc{i}}},\cdots,\h{\Proc{n}})
\end{cases}
\end{array}$\\
\caption{Non-recursive definition of $\mathsf{Compat}$ predicate}
\end{figure}
}

\newcommand{\syntassertion}{
\begin{figure}[h]
\begin{align*}
 \Assn &= \HistAssn \land \LStateAssn\\
 \HistAssn &= \forall \phv.~ ?\shvar\rightarrow \phv.~ P(\phv) \mid \forall \phv.~ !\shvar\rightarrow \phv.~ P(\phv)\\
 & \mid \forall \phv.~ \Last(?\shvar)\rightarrow \phv.~ P(\phv) \mid \forall \phv.~ \Last(!\shvar)\rightarrow \phv.~ P(\phv)\\ 
 & \mid \HistAssn_1 \land \HistAssn_2 \mid \HistAssn_1 \lor \HistAssn_2 \mid [\HistAssn_1;\HistAssn_2] \\
 & \mid \HistAssn^* \mid \None(!\shvar) \mid \None(?\shvar) \mid \HistAssn_1 \implies \HistAssn_2 \\
 \LStateAssn &= Bexp \mid \LStateAssn_1 \land \LStateAssn_2 \mid \LStateAssn_1 \lor \LStateAssn_2 \\
 & \mid \LStateAssn_1 \implies \LStateAssn_2\\
 Bexp & = Aexp \oplus Aexp\\
 Aexp & = \locvar \mid \val \mid \phv \mid Aexp + Aexp \mid Aexp - Aexp\\
 \oplus &= > \mid < \mid \ge \mid \le \mid \neq \mid = \mid + \mid - \mid \\
\end{align*}
\end{figure}
}



\newcommand{\semassertion}{
\begin{figure}[h]
 \begin{align*}
\Assndenot{\HA \land \LA} =& \set{\State{ls}{\sigma} \mid \exists \bind.~ (\sigma,\bind)\in \Histdenot{\HA} \land \Lstdenot{\bindsubst{\LA}{\bind}}{ls}=\true} \\  
\Histdenot{\forall \phv.~ ?\shvar\rightarrow \phv.~ P(\phv)} =&\set{(\sigma,\phv/v)\mid \len{\sigma}=1 \land \sigma(0)=(?\shvar,v)} \\
\Histdenot{\forall \phv.~ !\shvar\rightarrow \phv.~ P(\phv)} =&\set{(\sigma,\phv/v)\mid \len{\sigma}=1 \land \sigma(0)=(!\shvar,v)} \\
\Histdenot{\forall \phv.~ \Last(?\shvar)\rightarrow \phv.~ P(\phv)} =&\set{(\sigma,\phv/v)\mid Last(\restr{\sigma}{?\shvar})=(?\shvar,v)} \\
\Histdenot{\forall \phv.~ \Last(!\shvar)\rightarrow \phv.~ P(\phv)} =&\set{(\sigma,\phv/v)\mid Last(\restr{\sigma}{!\shvar})=(?\shvar,v)} \\
\Histdenot{\None(!et)} =&\set{(\sigma,\any)\mid \restr{\sigma}{!\shvar}=\epsilon} \\
\Histdenot{\None(!ot)} =&\set{(\sigma,\any)\mid \restr{\sigma}{?\shvar}=\epsilon} \\
\Histdenot{[\HA_1;\HA_2]} =&\left(\begin{array}{ll}
(\sigma,\bind) \mid 
&  \exists \sigma_0,\sigma_1,\bind_1,\bind_2.~ \sigma=\sigma_0.\sigma_1 \land (\sigma_0,\bind_1)\in \Histdenot{\HA_1}\\
&  \land (\sigma_1,\bind_2)\in \Histdenot{\bindsubst{\HA_2}{\bind_1}} \land \bind=\bind_1\cup\bind_2 
                                  \end{array}\right)\\
\Histdenot{\HA^*} =&\epsilon \cup \set{(\sigma,\bind)\mid (\sigma,\bind) \in \Histdenot{\HA^+}} \\
\Histdenot{\HA^+} =&\set{(\sigma,\bind)\mid (\sigma,\bind) \in \Histdenot{\HA;\HA^*}} \\
\Histdenot{\HA_1 \implies \HA_2} =&\set{(\sigma,\bind)\mid (\sigma,\bind) \in \Histdenot{\HA_1} \implies (\sigma,\bind)\in \Histdenot{\HA_2}} \\
\Histdenot{\HA_1 \land \HA_2} =&\set{(\sigma,\bind)\mid (\sigma,\bind) \in \Histdenot{\HA_1} \land (\sigma,\bind)\in \Histdenot{\HA_2}} \\
\Lstdenot{\locvar}{ls} =& ls(\locvar)\\
\Lstdenot{v}{ls} =&v\\
\Lstdenot{ae_1 + ae_2}{ls} =& \Lstdenot{ae_1}{ls} + \Lstdenot{ae_2}{ls}\\
\Lstdenot{ae_1 - ae_2}{ls} =& \Lstdenot{ae_1}{ls} - \Lstdenot{ae_2}{ls}\\
\Lstdenot{\LA_1 \land \LA_2}{ls} = & \Lstdenot{\LA_1}{ls} \land \Lstdenot{\LA_2}{ls}\\
\Lstdenot{\LA_1 \lor \LA_2}{ls} = & \Lstdenot{\LA_1}{ls} \lor \Lstdenot{\LA_2}{ls}\\
\Lstdenot{\LA_1 \oplus \LA_2}{ls} = & \Lstdenot{\LA_1}{ls} \oplus \Lstdenot{\LA_2}{ls}\\
\Lstdenot{\LA_1 \implies \LA_2}{ls} = &\Lstdenot{\LA_1}{ls} \implies \Lstdenot{\LA_2}{ls}\\
 \end{align*}
\end{figure}
}

\newcommand{\defpredicates}{
\begin{figure}[h]
\begin{minipage}{.5\textwidth}
\begin{align*}
  \None?\shvar \defeq& \lambda elem, h.~ \restr{h}{\r{\shvar}{\any}} = \epsilon\\  
  \None!\shvar \defeq& \lambda elem, h.~ \restr{h}{\w{\shvar}{\any}} = \epsilon\\
\end{align*} 
\end{minipage}
\begin{minipage}{.5\textwidth}
\begin{align*}
\None! \defeq& \lambda elem, h.~ \restr{h}{\w{\any}{\any}} = \epsilon\\
 ~[P;Q]\defeq&  \lambda h. \exists h_0,h_1.~ h=h_0.h_1 \land P(h_0) \land Q(h_1)\\
\end{align*} 
\end{minipage}
\begin{align*}
 ~[P]^* \defeq&  \lambda h.~ h=\epsilon \lor \exists h_0,h_1.~ h=h_0.h_1 \land P(h_0) \land  [P]^*(h_1)\\
 ~[P]^+ \defeq&  \lambda h.~ \exists h_0,h_1.~ h_0\neq \epsilon \land h=h_0.h_1 \land P(h_0) \land [P]^*(h_1)\\
\end{align*}
\caption{Predicates on individual history variable $\h{i}$}
\label{fig:predhist}
\end{figure}
}

%
In the proof system of \sou every process $\Proc{i}$ executing a program $\program_i$ 
has a history variable $\h{\Proc{i}}$ which 
captures the interaction of this process with shared variables in terms of 
values read from and written to them. 
$\h{\Proc{i}}$ is a sequence of elements of the form $\r{\shvar}{\ph}$ or $\w{\shvar}{v}$. $\r{\shvar}{\ph}$ is 
added to the sequence when $\Proc{i}$ reads a value from the shared variable $\shvar$. Similarly, $\w{\shvar}{v}$ 
is added to the sequence when $\Proc{i}$ writes a value $\mathrm{v}$ to the shared variable $\shvar$.  
Local reasoning of a program $\program_i$ generates a triple of the form 
$\HOARE{P_i}{\program_i}{Q_i}$ where $P_i$ and $Q_i$ define assertions 
on the \textit{local state} of the process as well as on the \textit{history} $\h{\Proc{i}}$.
Rest of this section describes the axioms of this proof system explaining the idea of local 
reasoning in terms of history variable and the parallel composition rule in terms of $\mathsf{Compat}$ predicate.

\paragraph{Axioms} In our programming language 
all program constructs, except $\mathrm{GR}$ and $\mathrm{GW}$, operate on local expression and therefore do not require reading from or writing 
to shared variables. 
Therefore the proof rules for these constructs are same as the Hoare's axioms in sequential 
setting.
In the following proof rules, the notation 
$\subst{P}{Q}{Q'}\subst{}{R}{R'}$ denote simultaneous substitution of $Q'$ for $Q$ and $R'$ for $R$ in $P$. Operator $``."$ 
concatenates an element to a sequence and $\epsilon$ is the empty sequence. Given a sequence $\sigma$, $\len{\sigma}$ is the length and $\sigma[i]$ 
is the $i^{th}$ element of this sequence.
The proof rule for the assignment to shared variable is as following,
\[
 \tag{GWrite}
 \HOARE{\subst{P}{\h{\Proc{i}}}{\h{\Proc{i}}.\w{\shvar}{\locexp}}}{\Assign{\shvar}{\locexp}}{P}
\]
As a result of this write the history is appended with the element $\w{\shvar}{\locexp}$. The proof rule for the reading of a shared variable, 
$\Assign{\locvar}{\shexp_j}$, is given as
\[
 \tag{GRead}
 \HOARE{\forall \ph.~\subst{P}{\h{\Proc{i}}}{\h{\Proc{i}}.\r{\shvar_j}{\ph}}\subst{}{\locvar}{\subst{\shexp_j}{\shvar_j}{\ph}} }{\Assign{\locvar}{\shexp_j}}{P}
\]

This rule requires the value of the shared variable $\shvar_j$ in order to evaluate the 
expression $\shexp_j$ but while reasoning locally we do not know the value beforehand. Therefore instead of the
actual value of $\shvar_j$ a \textit{placeholder variable} $\ph$ is assigned to $\shvar_j$ and this information is stored in $\h{\Proc{i}}$ by appending it 
with $\r{\shvar_j}{\ph}$. Further, $\shexp_j$ is evaluated accordingly 
before being assigned to $\locvar$ in the assertion. Here $\ph$ is universally quantified to all possible
values. 

\infrule[ParComp]{
\HOARE{P_i \land \h{\Proc{i}}=\epsilon}{\program_i}{Q_i},~ i=1\cdots n
}{
\HOARE{ P_1\land \cdots \land P_n \land \shvar_1=v_1 \land \cdots \shvar_m=v_m}{\\\program_1\|\cdots\|\program_n\\}{Q_1 \land \cdots \land Q_n \land \mathsf{Compat}(v_1,\cdots,v_m,\h{\Proc{1}},\cdots,\h{\Proc{n}})}
}
In parallel composition, each process is analyzed in isolation with the initial value of 
its history variable $\h{\Proc{i}}$ set to empty and some precondition $P_i$ on its local state. This gives post-condition $Q_i$ for each process $\Proc{i}$ which also 
contains the assertions on $\h{\Proc{i}}$. Precondition of the parallel composition rule is the conjunctions of individual processes' preconditions 
and the initial values of shared variables $\shvar_1$ to $\shvar_m$. Post-condition of this rule is the conjunctions of individual processes' post-conditions 
along with the predicate $\mathsf{Compat}$ where $\mathsf{Compat}$ is defined in Figure \ref{fig:compnonrec}.
\compnonrec
$Merge(\h{\Proc{1}},\cdots,\h{\Proc{n}})$ represent the set of all possible interleavings of histories $\h{\Proc{1}}$ to $\h{\Proc{n}}$ such that the sequence of 
elements within them is preserved in the merged history. Function $f_j(v,h)$ returns the last 
value written to the variable $\shvar_j$ in history $h$. It returns $v$ if no such write is found.
Essentially, the predicate $\mathsf{Compat}$ generates a set of 
equality predicates (one corresponding to each merged history). The first line in $\mathsf{Compat}$'s definition denotes that the final 
value of any shared variable $\shvar_j$ is the last value written to $\shvar_j$ in that merged history. 
Second line relates the placeholder value $\ph_j$, corresponding to a read of a shared variable $\shvar_j$, 
to the latest value written to $\shvar_j$ just before this element in the merged history. 
This is sufficient to characterize all compatible merged histories and therefore plays the central 
role in proofs of \S\ref{sec:examples} and of Appendix \ref{lamport}.

Individual process histories contains placeholder variables for every read. In Figure \ref{fig:predhist} we define, rather informally, 
a set of predicates over these histories in order to succinctly represent them in our proofs. Given a sequence $\sigma$, $\restr{\sigma}{type}$ 
denote the restricted subsequence of $\sigma$ consisting of only $type$ elements. Predicate $\None!$ holds if the history 
does not contain any write to any shared variable. $\None!\shvar$ and $\None?\shvar$ hold if the history does not contain any write to or read from 
the variable $\shvar$. $[P]^*$ and $[P]^+$ capture the regularity of history sequence by abstracting the placeholder variables. 
We also admit $P^+\limp P^*$ as a relaxation on the history sequence which is used in the consequence rule.
We admit that the use of these predicates, without giving proper semantics,
is not fully justified but we do it solely for the purpose of making our proofs manageable. We leave more formal treatment of these predicates for the future work. 

\defpredicates

%

\section{Relaxed Memory Model}
\label{sec:relmodels}
\newcommand{\buffigure}
{
\begin{figure*}[t]
\subfigure[]{
\includegraphics[scale=.7]{./bufftso.eps}
\label{fig:bufftso}
}\qquad \qquad
\subfigure[]{
\includegraphics[scale=.7]{./psobuff.eps}
\label{fig:buffpso}
}
\label{fig:buffExample}
\caption[Optional caption for list of figures]{\subref{fig:bufftso} shows the buffer of a thread $t$ in TSO semantics, \subref{fig:buffpso} shows the buffers of a thread $t$ in PSO semantics}
\end{figure*}
}

\newcommand{\relsemantics}{
\begin{figure*}[h]
$\begin{array}{l@{~}|@{~}r}
\begin{minipage}{0.56\textwidth}
\small
\infrule[GW]{
C = \Assign{\shvar}{\locexp};C' \quad \B(\shvar)=\sigma \\ \denotrm{\locexp}{\G,\L,\B}=v \quad \B'=\B[\shvar := \sigma.{v}]
}{
\rmconfig{\G}{\T{t}{\L,\B}{C} \cup T} \rmtrans \rmconfig{\G}{\T{t}{\L,\B'}{C'}\cup T}
}
\vspace{.2em}

\infrule[LRW]{
C=\Assign{\locvar}{\locexp};C' \\ \denotrm{\locexp}{\G,\L,\B}=v \quad \L'=\L[\locvar:=v]
}{
\rmconfig{\G}{\T{t}{\L,\B}{C}\cup T} \rmtrans \rmconfig{\G}{\T{t}{\L',\B}{C'}\cup T}
}

\vspace{.2em}

\infrule[Join]{
C=\join{\locvar};C' \\ \denotrm{\locvar}{\G,\L,\B}=t' \in \tid \quad t'\not\in \dom(T) \cup \set{t}
}{
\rmconfig{\G}{\T{t}{\L,\B}{C}\cup T} \rmtrans \rmconfig{\G}{\T{t}{\L,\B}{C'}\cup T}
}

\end{minipage}&
\begin{minipage}{0.52\textwidth}
\small
\infrule[GR]{
C=\Assign{\locvar}{\shexp};C' \\ \denotrm{\shexp}{\G,\L,\B}=v \quad \L'=\L[\locvar:=v]
}{
\rmconfig{\G}{\T{t}{\L,\B}{C}\cup T} \rmtrans \rmconfig{\G}{\T{t}{\L',\B}{C'}\cup T}
}
\vspace{.2em}

\infrule[END]{
C=\finish \quad \forall \shvar \in \dom(\B).~ \B(\shvar)=\epsilon
}{
\rmconfig{\G}{\T{t}{\L,\B}{C}\cup T} \rmtrans \rmconfig{\G}{T}
}

\vspace{.2em}

\infrule[Flush]{
\exists \shvar \in \dom(\B).~ \B(\shvar)=v.\sigma \\ \B'=\B[\shvar:=\sigma] \quad \G'=\G[\shvar:=v]
}{
\rmconfig{\G}{\T{t}{\L,\B}{C}\cup T} \rmtrans \rmconfig{\G'}{\T{t}{\L,\B'}{C}\cup T}
}
\vspace{.2em}

\infrule[Fence]{
C=\fence;C' \\ \forall \shvar \in \dom(\B).~ \B(\shvar)=\epsilon 
}{
\rmconfig{\G}{\T{t}{\L,\B}{C}\cup T} \rmtrans \rmconfig{\G}{\T{t}{\L,\B}{C'}\cup T} \\
}

\end{minipage}
\end{array}$
\begin{minipage}{\textwidth}
\vspace{1em}
\infrule[PARCOMP]{
T'=T\cup \set{\T{t_1}{\emptyset}{\program_1},\cdots,\T{t_n}{\emptyset}{\program_n}} \\
t_1,\cdots,t_n \not\in \dom(T)\cup\set{t}\\
C=\program_1\|\cdots\|\program_n ;C' \quad \forall \shvar \in \dom(\B).~ \B(\shvar)=\epsilon
}{
\rmconfig{\G}{\T{t}{\L,\B}{C}\cup T} \rmtrans \rmconfig{\G}{\T{t}{\L,\B}{\join{t_1};\cdots;\join{t_n};C'}\cup T'}
}
\end{minipage}
\caption{Relaxed semantics for programming language of Figure \ref{fig:langsyntax}}
\label{fig:relsem}
\end{figure*}
}

\newcommand{\denrm}{
\begin{figure}[h]
\begin{align*}
\denotrm{\shvar}{\G,\L,\B} \defeq&
\begin{cases}
  v & \text{ if } \B(\shvar)=\sigma.v \\
 \G(\shvar)        & \text{ otherwise } 
\end{cases}
\end{align*}
$\denotrm{\locvar}{\G,\L,\B} = \L(\locvar) 
\qquad \qquad \qquad \qquad  \denotrm{e_1\oplus e_2}{\G,\L,\B} = \denotrm{e_1}{\G,\L,\B} \oplus \denotrm{e_1}{\G,\L,\B}$
\caption{$\denotrm{-}{\G,\L,\B}$  for relaxed semantics}
\label{fig:denrm}
\end{figure}
}

 \relsemantics 
  \denrm
For the rest of this paper we consider the PSO memory model which allows reordering of a write instruction with future reads and future write instructions operating on different variables. To simulate the effect of PSO, every thread is equipped with one buffer per shared variable. These buffers store the values written on the corresponding variable by this thread in a queue(FIFO) discipline. Buffering the value of the write in a variable specific queue, simulates the effect of delaying the execution of a write instruction past future reads and future writes on different memory locations. If a read instruction for any  shared variable $\shvar$ is executed in a thread then the local buffer of $\shvar$ is checked first. If this buffer is non empty then the latest written value (from the tail) is returned. In case of empty buffer the value is read from the global state. 
This memory model also provides an explicit $\fence$ instruction in order to restrict reordering of any two instructions within a thread. Operationally this is achieved by flushing all the buffers of that thread. 

For relaxed semantics we need some modifications in the notion of State. 
A state or configuration under this memory model is of the form $\rmconfig{\G}{\TStore}$ where,
\[
\begin{array}{lr}
 \G \defeq \shvar \to \val
&~~
\TStore \defeq \tid \to \L \times \B \times \program\\
\B \defeq \shvar \to \sigma
&~~
 \L \defeq \locvar \to \val 
\end{array}
\]
The only change with respect to SC state is in the definition of thread store. Here, the range of this function 
also contains a buffer store $\B$ which is a function from shared variable to an ordered sequence of values, ranged over by $\sigma$.
Function $\denotrm{-}{\G,\L,\B} \in \exp \to \G \to \L \to \B \to \val$, defined in Figure \ref{fig:denrm},
takes an expression and evaluates it based on the values stored in the global store, thread local store and buffer store,
Further, relaxed semantics is defined in Figure \ref{fig:relsem}. For the constructs not shown in Figure \ref{fig:relsem} the semantics is the same as in the SC semantics of Figure \ref{fig:refsem} except for the change in the evaluation function from $\denot{-}{\G,\L}$ to $\denotrm{-}{\G,\L,\B}$.
In relaxed semantics, a thread $t$ executing a write to a shared variable $\shvar$ enqueues the value in the buffer of $\shvar$ in $t$. Any read of a shared variable $\shvar$ returns the latest value in the buffer of $\shvar$, if any. If this buffer is empty then the read returns the value from the global state (memory). $\mathrm{Flush}$ operation non-deterministically deques an element from any thread buffer and updates the global state accordingly. Further, in the parallel composition rule the requirement for the parent thread's buffer being empty ensures that instructions before and after the parallel composition are ordered. Same holds for the $\finish$ command as well.

\section{Examples}
\label{sec:examples}
We use our proof system to prove the correctness of Lamport's bakery algorithm \cite{Lamportbakery} for two processes and Simpsons's 4 slot algorithm \cite{Simpson4slot}. 
\begin{itemize}
 \item Lamport's algorithm has unbounded data domain for token variables which makes its verification challenging for model checking based approaches. We prove that this algorithm satisfies the mututal exclusion property, i.e. it is never possible for both processes to be inside their critical section simultaneously.
 \item Simpson's 4 slot algorithm implements a wait-free and lock-free atomic register for concurrent reader and writer. This algorithm uses disjoint slots to read from and write to in presence of interference. We prove that this algorithm is safe in the sense that concurrent reader and writer never use the same slot in presence of interference. 
 Further, we also prove that the values observed by successive reads are in the same order as written by the writer.
\end{itemize}
One important notation used in our proofs is as following;
Let $P$ and $Q$ be the assertions about individual elements in a history sequence then  $P\before Q$ denotes the fact that the element satisfying the assertion $P$ appears before the element satisfying the assertion $Q$ in the history sequence.

\newcommand{\token}{\var{token}}
\newcommand{\taking}{\var{taking}}
\newcommand{\prothreeoone}{
\assh{Inv1\defeq \lambda h.~}\\
\assh{\inpar{
\Let ph_1,ph_2,ph_3,ph_4,\\
~~~~P_1,Q_1,R_1,T_1,U_1,V_1,W_1. \In~\\
P_1 = \lambda h.~ \w{token_1}{0}(h) \In\\
Q_1 = \lambda h.~ \w{taking_1}{\true}(h) \In\\
R_1 = \lambda h.~ \r{token_2}{ph_1}(h) \In\\
T_1 = \lambda h.~ \w{token_1}{ph_1+1}(h) \In\\
U_1 = \lambda h.~ \w{taking_1}{\false}(h) \In\\
V_1 = \lambda h.~ \r{taking_2}{ph_2}(h) \In\\
W_1= \lambda h.~  \r{token_1}{ph_3}(h) \In\\
X_1= \lambda h.~  \r{token_2}{ph_4}(h) \In\\
~[[Q_1;R_1;T_1;U_1;\None!;P_1]^*;Q_1;R_1;T_1;\\
~~~~~U_1;\None!;V_1;\None!;W_1;X_1](h)
}}
}

\newcommand{\prothreetwo}{
\assh{Inv2\defeq \lambda h.~}\\
\assh{\inpar{
\Let ph_1',ph_2',ph_3',ph_4',\\
~~~~P_1,Q_2,R_2,T_2,U_2,V_2,W_2. \In~\\
P_2 = \lambda h.~ \w{token_2}{0}(h) \In\\
Q_2 = \lambda h.~ \w{taking_2}{\true}(h) \In\\
R_2 = \lambda h.~ \r{token_1}{ph_1'}(h) \In\\
T_2 = \lambda h.~ \w{token_2}{ph_1'+1}(h) \In\\
U_2 = \lambda h.~ \w{taking_2}{\false}(h) \In\\
V_2 = \lambda h.~ \r{taking_1}{ph_2'}(h) \In\\
W_2= \lambda h.~  \r{token_1}{ph_3'}(h) \In\\
X_2= \lambda h.~  \r{token_2}{ph_4'}(h) \In\\
~[[Q_2;R_2;T_2;U_2;\None!;P_2]^*;Q_2;R_2;T_2;\\
~~~~~U_2;\None!;V_2;\None!;W_2;X_2](h)
}}
}

\newcommand{\FigureExIII}{
\begin{figure}[h]
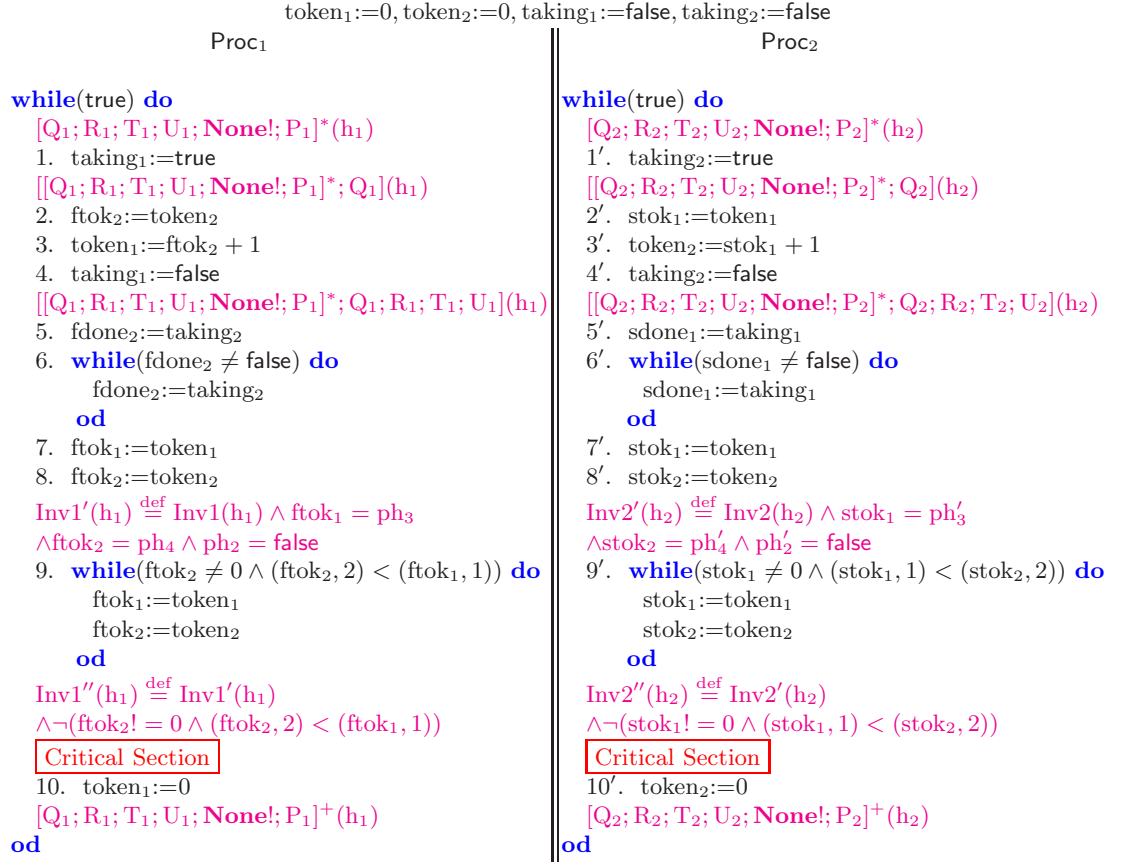

\centering
\small
$\begin{array}{c}
  \Assign{token_1}{0}, \Assign{token_2}{0}, \Assign{taking_1}{\false}, \Assign{taking_2}{\false}
\\
\begin{array}{l||l}
\inarr{
\begin{array}{r}
{\qquad \qquad \qquad \qquad {\Proc{1}}}\\ \\
\end{array}\\
\CWhile{\true}\\
\quad \assh{[Q_1;R_1;T_1;U_1;\None!;P_1]^*(h_1)}\\
\quad \lab{1} \Assign{taking_1}{\true}\\
\quad \assh{[[Q_1;R_1;T_1;U_1;\None!;P_1]^*;Q_1](h_1)}\\
\quad \lab{2} \Assign{ftok_2}{token_2}\\
\quad \lab{3} \Assign{token_1}{ftok_2+1}\\
\quad \lab{4} \Assign{taking_1}{\false}\\
\quad \assh{[[Q_1;R_1;T_1;U_1;\None!;P_1]^*;Q_1;R_1;T_1;U_1](h_1)}\\
\quad \lab{5} \Assign{fdone_2}{taking_2}\\
\quad \lab{6} \CWhile{\var{fdone_2}\neq \false}\\
\quad ~~~~ \quad \Assign{fdone_2}{taking_2}\\
\quad ~~~~~\COd\\
\quad \lab{7} \Assign{ftok_1}{token_1}\\
\quad \lab{8} \Assign{ftok_2}{token_2}\\
\quad \assh{Inv1'(h_1)\defeq Inv1(h_1) \land ftok_1 = ph_3 }\\
\quad \assh{\land ftok_2=ph_4 \land ph_2=\false}\\
\quad \lab{9}\CWhile{\var{ftok_2}\neq 0 \land (\var{ftok_2},2)<(\var{ftok_1},1)}\\
\quad~~~~\quad  \Assign{ftok_1}{token_1}\\
\quad~~~~\quad  \Assign{ftok_2}{token_2}\\
\quad ~~~~~\COd\\
\quad \assh{Inv1''(h_1)\defeq Inv1'(h_1)}\\
\quad \assh{\land \neg(ftok_2!=0 \land (ftok_2,2)<(ftok_1,1))}\\
\quad \CS \\
\quad \lab{10} \Assign{token_1}{0}\\
\quad \assh{[Q_1;R_1;T_1;U_1;\None!;P_1]^+(h_1)}\\
\COd
}
& 
\inarr{
\begin{array}{r}
{\qquad \qquad \qquad \qquad {\Proc{2}}}\\ \\
\end{array}\\
\CWhile{\true}\\
\quad \assh{[Q_2;R_2;T_2;U_2;\None!;P_2]^*(h_2)}\\
\quad \lab{1'}\Assign{taking_2}{\true}\\
\quad \assh{[[Q_2;R_2;T_2;U_2;\None!;P_2]^*;Q_2](h_2)}\\
\quad \lab{2'}\Assign{stok_1}{token_1}\\
\quad \lab{3'}\Assign{token_2}{stok_1+1}\\
\quad \lab{4'}\Assign{taking_2}{\false}\\
\quad \assh{[[Q_2;R_2;T_2;U_2;\None!;P_2]^*;Q_2;R_2;T_2;U_2](h_2)}\\
\quad \lab{5'}\Assign{sdone_1}{taking_1}\\
\quad \lab{6'}\CWhile{\var{sdone_1}\neq \false}\\
\quad ~~~~ \quad \Assign{sdone_1}{taking_1}\\
\quad ~~~~~ \COd\\
\quad \lab{7'}\Assign{stok_1}{token_1}\\
\quad \lab{8'}\Assign{stok_2}{token_2}\\
\quad \assh{Inv2'(h_2)\defeq Inv2(h_2) \land stok_1 = ph_3'}\\
\quad \assh{\land stok_2=ph_4' \land ph_2'=\false}\\
\quad \lab{9'}\CWhile{\var{stok_1}\neq0 \land (\var{stok_1},1)<(\var{stok_2},2)}\\
\quad~~~~\quad  \Assign{stok_1}{token_1}\\
\quad~~~~\quad  \Assign{stok_2}{token_2}\\
\quad ~~~~~\COd\\
\quad \assh{Inv2''(h_2)\defeq Inv2'(h_2)}\\
\quad \assh{ \land \neg(stok_1!=0 \land (stok_1,1)<(stok_2,2))}\\
\quad \CS\\
\quad \lab{10'}\Assign{token_2}{0}\\
\quad \assh{[Q_2;R_2;T_2;U_2;\None!;P_2]^+(h_2)}\\
\COd
}
\end{array}
\end{array}$
 \caption{Lamport's Bakery Algorithm for Two Processes}
 \label{fig:ex3}
\end{figure}
}

\subsection{Example: Lamport's Bakery Algorithm for Two Processes}
 
\begin{figure}[h]
 \begin{minipage}{0.5\textwidth}
  $\prothreeoone$
 \end{minipage}
\begin{minipage}{0.5\textwidth}
 $\prothreetwo$
\end{minipage}
\caption{Invariants $Inv1$ and $Inv2$ of Figure \ref{fig:ex3}}
\label{fig:ex3inv}
\end{figure}
\FigureExIII

In Lamport's algorithm each process $\Proc{i}$ operates on shared variables $token_i$ and $taking_i$ of type integer and boolean respectively. 
When a process $\Proc{i}$ intends to enter the critical section, it first reads the value of $token$ corresponding to another process, say $v$, 
and assigns the value $v+1$ to its own $token$ variable. $ftok_1$ and $ftok_2$ are local variables of $\Proc{1}$ which hold the value of $token_1$ and $token_2$ respectively. Similarly,
$stok_1$ and $stok_2$ are local variables of $\Proc{2}$ for $token_1$ and $token_2$. $(a,b) < (a',b')$ denotes lexicographic less than relation, i.e. 
$(ftok_2,2)<(ftok_1,1)$ \emph{iff} $ftok_2<ftok_1$ and $(stok_1,1)<(stok_2,2)$ \emph{iff} $stok_1\le stok_2$. 
This algorithm with inline assertions is shown in Figure \ref{fig:ex3} where $Inv1$ and $Inv2$ are as in Figure \ref{fig:ex3inv}. Assertions 
are on the history $\mathrm{h}_i$ and local variables of that process. History $\mathrm{h}_i$ is abstracted using the predicates of Figure \ref{fig:predhist}.
It should be noted that $V_1,W_1$ and $X_1$ do not appear explicitly inside the regular structure of the history abstracted by $[Q_1;R_1;T_1;U_1;\None!;P_1]^*$.
They appear in the last iteration of the loop and therefore the variables updated inside the loop, $(ftok_1,ftok_2)$, are assigned 
the placeholders corresponding to these reads, i.e. $\ph_2$, $\ph_3$ and $\ph_4$. Same holds for the invariant of $\Proc{2}$ as well. Subsequently, these elements are 
abstracted to $\None!$ in order to establish the loop invariant.

\begin{description}
\item [Mutual Exclusion proof]
 In order to prove the mutual exclusion property, we first state the required assertion to capture this property. 
\[
 \begin{array}{l}
ME \defeq \left(Inv1''(h_1) \land Inv2''(h_2) \land \compat(0,0,\false,\false,h_1,h_2) = \false\right)
 \end{array}
\]
where $Inv1''(h_1)$ and $Inv2''(h_2)$ are the assertions inside critical sections of $\Proc{1}$ and $\Proc{2}$ respectively. 
We prove it in two steps. First we show that,
\begin{equation*}\label{proof:thrd:1}
\begin{array}{l}
Inv1'(h_1) \land Inv2'(h_2) \land~ \compat(0,0,\false,\false,h_1,h_2) \implies Inter.~
\\~~Inter \defeq \inpar {Inter1 \defeq (ftok_2=0 \implies stok_1\neq0 \land stok_1< stok_2) 
\\~~~~~\land Inter2 \defeq (stok_1=0 \implies ftok_2\neq0 \land ftok_2<ftok_1)
\\~~~~~\land Inter3 \defeq (ftok_2\neq0 \land stok_1\neq0 \implies 
\\~~~~~~~~~~~~~~~~~~~~~~~~~~ftok_1 \le ftok_2 \implies stok_1\le stok_2)}
\end{array}
\end{equation*}

and then it is easy to show that,
\begin{equation*}\label{proof:thrd:2}
\begin{array}{l}
Inter \land \neg(ftok_2\neq0 \land (ftok_2,2)<(ftok_1,1)) 
\\~~~~~~~\land \neg(stok_1\neq0 \land (stok_1,1)<(stok_2,2))  = \false
\end{array}
\end{equation*}

\item[Proof of $Inv1'(h_1) \land Inv2'(h_2) \land \compat(0,0,\false,\false,h_1,h_2)\implies Inter1$],
We show that in two steps,
\begin{equation}\label{prf:1}
 \begin{array}{l}
 Inv1'(h_1) \land Inv2'(h_2) \land \compat(0,0,\false,\false,h_1,h_2)
\\~~~\land ftok_2=0 \implies stok_1\neq0 
 \end{array}
\end{equation}
and 
\begin{equation}\label{prf:2}
 \begin{array}{l}
 Inv1'(h_1) \land Inv2'(h_2) \land \compat(0,0,\false,\false,h_1,h_2) 
\\~~~\land ftok_2=0 \implies stok_1< stok_2 
 \end{array}
\end{equation}

\item[Proof of \eqref{prf:1}],
First assume that $Inv1'(h_1)$, $Inv2'(h_2)$, $\compat(0,0,\false,\false,h_1,h_2)$ and $ftok_2=0$ hold. 
Only way to have $ftok_2=ph_4=0$ is to put last read of $token_2$ in $h_1$ (denoted by $X_1$) in the merged history 
after $P_2$ of, say $k-1^{th}$ iteration of $[Q_2;R_2;T_2;U_2;\None!;P_2]$ and before $T_2$ of $k^{th}$ iteration. 
$Inv1$ implies that \ul{$T_1\before X_1$} hence the value of $token_1$ visible at $X_1$ is non-zero which also 
becomes visible at $T_2$ because of the placement of $X_1$ before $T_2$ and the fact that the history $\h{\Proc{1}}$ does not 
 contain any write to $token_1$ after $X_1$.
Further because $Inv2$ implies \ul{$T_2\before W_2$} therefore the read of $token_1$ at $W_2$ in $\Proc{2}$
also observes this non-zero value of $token_1$ and assigns it to $stok_1$. Therefore we have,
\begin{equation*}
 \begin{array}{l}
 Inv1(h_1) \land Inv2(h_2) \land \compat(0,0,\false,\false,h_1,h_2)
\\~~~\land ftok_2=0 \implies stok_1\neq0 
 \end{array}
\end{equation*}

Out of all the orderings of $\Proc{1}$ and $\Proc{2}$ only $T_1\before X_1$ and $T_2\before W_2$ 
are used to prove this part of the proof. Hence these two are \textit{sufficient} to prove \eqref{prf:1}.


\item[Proof of \eqref{prf:2}]
In order to have $ftok_2=ph_4=0$,  $X_1$ should be placed in the merged history 
after $P_2$, 
of say $k-1^{th}$ iteration of $[Q_2;R_2;T_2;U_2;\None!;P_2]$ and before $T_2$ in $k^{th}$ iteration.
Further, $Inv_1'$ implies $ph_2=\false$ and therefore $V_1$(corresponding to the read of $taking_2$ in last iteration) 
must be placed before $Q_2$ of any $n^{th}$ iteration and 
after $U_2$ of $n-1^{th}$ iteration such that $n\le k^{th}$. It must be noticed that $V_1$ cannot be put
after $U_2$ of $k^{th}$ iteration because \ul{$V_1\before X_1$} in the history of $\Proc{1}$.
$Inv1$ implies \ul{$T_1\prec V_1$} and hence the value of $token_1$ visible at 
$V_1$ is non-zero, say $v$. As $V_1$ has been placed before $Q_2$ of $n^{th}$ iteration, where $n\le k$,
hence the same value $v$ of $token_1$ is also visible at $Q_2$ of $n^{th}$ iteration.
$Inv2$ implies that \ul{$Q_2\before R_2$}
hence the same value $v$ of $token_1$ is also visible at $R_2$ of any subsequent iterations. $Inv2$ implies \ul{$R_2\before T_2$} 
and therefore all subsequent iterations of $T_2$ write $v+1$ to $token_2$ resulting in $stok_2=v+1$. 
Further the value visible at $stok_1$ in $\Proc{2}$ is still the last value written to $token_1$ by $\Proc{1}$, i.e. $v$. Therefore we get,
\begin{equation*}
 \begin{array}{l}
 Inv1'(h_1) \land Inv2'(h_2) \land \compat(0,0,\false,\false,h_1,h_2)
\\~~~\land ftok_2=0 \implies stok_1<stok_2
 \end{array}
\end{equation*}

Only orderings used in this part of the proof are $V_1\before X_1$, $T_1 \before V_1$, $R_2\before T_2$ and $Q_2 \before R_2$. Hence, out of 
all total orders in $h_1$ and $h_2$ these are \textit{sufficient} to prove \eqref{prf:2}.
%
%

\item[Proof of $Inv1' \land Inv2' \land \compat(0,0,\false,\false,h_1,h_2)\implies Inter2$,]
This is symmetric to previous proof and gives us following symmetric \textit{sufficient orderings}; $T_2\before X_2$, 
$T_1\before W_1$, $V_2\before X_2$, $T_2\before V_2$, $Q_1\before R_1$ and $R_1\before T_1$.

\item [Proof of $Inv1' \land Inv2' \land \compat(0,0,\false,\false,h_1,h_2)\implies Inter3$],
For $ftok_2\neq 0$, $X_1$ must read the non-zero value written to $token_2$ at $T_2$ and similarly for $stok_1\neq 0$,
$W_2$ must read the non-zero value written to $token_1$ at $T_1$. Let $X_1$ read from $T_2$ of 
iteration $k_2$ and $W_2$ reads from $T_1$ of iteration $k_1$. Following possibilities arise based on 
whether or not $k_1$ and $k_2$ are last iterations of $\Proc{1}$ and $\Proc{2}$.
\begin{description}
 \item [$k_1$ is not the last iteration and $k_2$ is the last iteration ]
In order to have $stok_1\neq0$, $W_2$ is placed after $T_1$ of $k_1^{th}$ iteration and before $P_1$ of $k_1^{th}$ iteration in the merged history. 
$Inv2$ implies that \ul{$T_2\before W_2$} and therefore the value written to $token_2$  in the last iteration of $\Proc{2}$
at $T_2$ is some $v$ such that $v\neq 0$ and it flows to $W_2$. $Inv1$ implies \ul{$P_1\before X_1$} and because $\Proc{1}$ does not write to 
$token_2$ hence $X_1$ also sees the value of $token_2$ as $v\neq 0$. 
$Inv1$ further implies that 
\ul{$P_1\before R_1$} for $R_1$ of any iteration greater than $k_1$ and \ul{$R_1\before T_1$} such that 
$R_1$ reads the value of $token_2$ and writes back this value incremented by 1 
to $token_1$ at $T_1$. Therefore $v+1$ is written to $token_1$ at $T_1$ in all iterations greater than $k_1$. 
Further, \ul{$T_1\before W_1$} implies that the same value $v+1$ is also visible at $W_1$.
Therefore we get $ftok_1>ftok_2$  implying $Inter3$ trivially holds.

\item [$k_1$ is the last iteration and $k_2$ is not the last iteration]
In order to have $ftok_2\neq 0$, $X_1$ is placed after $T_2$ of $k_2$ and before $P_2$ of $k_2$ in the merged history. 
$Inv1$ implies \ul{$T_1\before X_1$} and
therefore the value written to $token_1$ in the last iteration of $\Proc{1}$ at $T_1$ is some $v$ such that $v\neq 0$ and it flows to $X_1$.
$Inv2$ implies \ul{$P_2\before R_2$} for $R_2$ of any iteration greater than $k_2$ and \ul{$R_2\before T_2$}  such that $R_2$ reads the 
value of $token_1$ and writes back this value incremented by 1 to $token_2$ at $T_2$. Therefore $v+1$ is written to $token_2$ at 
$T_2$ in all iteration greater than $k_2$. Also $Inv2$ implies \ul{$T_2\before X_2$} which gives $stok_2$ equal to $v+1$. 
Further $Inv2$ also implies \ul{$P_2\before W_2$} 
hence the same value $v$ of $token_1$ which is visible at $P_2$ is assigned to $stok_1$. This results in $stok_1<stok_2$ and hence proved.

\item [Both $k_1$ and $k_2$ are last iterations]: In this case $X_1$ is placed after $T_2$ of the last iteration and because \ul{$T_2\before X_2$} hence 
$ftok_2=stok_2$. Similarly, if $W_2$ is placed after $T_1$ of the last iteration then from \ul{$T_1\before W_1$} in $Inv1$ we have $stok_1=stok_2$. 
Therefore $ftok_1\le ftok_2 \implies stok_1\le stok_2$ hence proved.

\item [Neither of $k_1$ and $k_2$ are last iterations]: We show that if $ftok_2$ reads the value of $token_2$ from $k_2^{th}$ iteration of $\Proc{2}$
 then it is not possible for $stok_1$ to read the value of $token_1$ from the iteration $k_1$ of $\Proc{1}$ which is not the last iteration. 
 Let $X_1$ is placed after $T_2$ and before $P_2$ of $k_2^{th}$ iteration. \ul{$T_1\before X_1$} implies 
that the value of $token_1$ visible at $X_1$ is from the last iteration of $T_1$ which also becomes visible at $P_2$ of $k_2^{th}$ iteration because of the placement of $X_1$. Further, \ul{$P_2\before W_2$} implies that $W_2$ sees the same value of $token_1$ written by the last iteration of $\Proc{1}$ at $T_1$ which is not $k_1$. 
Similar argument follows for $W_2$ as well. Hence no compatible merged history exists for this case.
\end{description}

Finally we collect the \textit{sufficient} orderings used to prove $Inter3$. 
\begin{itemize}
\item $T_1\before X_1$, $T_1\before W_1$, $P_1\before X_1$, $P_1\before R_1$ and $R_1\before T_1$ for $\Proc{1}$.
\item $T_2\before X_2$, $T_2\before W_2$, $P_2\before W_2$, $P_2\before R_2$ and $R_2\before T_2$ for $\Proc{2}$.
\end{itemize}
In $P_1\before R_1$ and $P_2\before R_2$, $R_i$ comes from the iteration later than that of $P_i$.

\paragraph{\textbf{Lamport's bakery algorithm under PSO memory model}}
PSO memory model allows write instructions in a process to be reordered with future write and read instructions 
operating on different addresses. We have following \textit{sufficient} instruction orderings needed to prove $Inter$ 
and hence mutual exclusion.
\begin{itemize}
 \item $T_1\before V_1$, $T_1\before W_1$, $T_1\before X_1$, $P_1\before X_1$, $V_1 \before X_1$, $Q_1\before R_1$, $R_1\before T_1$
 \item $T_2\before V_2$, $T_2\before X_2$, $T_2\before W_2$, $P_2\before W_2$, $V_2 \before X_2$, $Q_2\before R_2$, $R_2\before T_2$
 \item $P_1\before R_1$ if $P_1$ is from iteration $k$ then $R_1$ is from iteration greater than $k$.
 \item $P_2\before R_2$ if $P_2$ is from iteration $k$ then $R_2$ is from iteration greater than $k$.
 \end{itemize}
The PSO memory model preserves the ordering of any two instructions which are data or control dependent. Therefore $T_1\before W_1$, $T_2\before X_2$,
$R_1\before T_1$ and $R_2\before T_2$ are also satisfied by PSO. Further, PSO also preserves the order of two read instructions, i.e. 
$V_1\before X_1$ and $V_2\before X_2$. A fence between $T_1$ and $V_1$ also satisfies the ordering $T_1\before X_1$ and 
symmetrically a fence between $T_2$ and $V_2$ also satisfies the ordering $T_2\before W_2$.
Further, a fence between $Q_1$ and $R_1$ also orders $P_1$ of $k^{th}$ iteration before $R_1$ of $\ge k+1^{th}$ iterations and $X_1$.
A fence between $Q_2$ and $R_2$ also orders $P_2$ of $k^{th}$ iteration before $R_2$ of $\ge k+1^{th}$ iteration and $W_2$.
Therefore we only need two fence instructions per process, i.e. between $Q_1$ and $R_1$ and between $T_1$ and $V_1$ in $\Proc{1}$ and symmetrically in $\Proc{2}$.
\end{description}

%
%
%
%

\newcommand{\reading}{\mathit{reading}}
\newcommand{\latest}{\mathit{latest}}
\renewcommand{\index}{\mathit{index}}
\newcommand{\slot}{\var{slot}}
\newcommand{\profiveone}{
\assh{Inv1\defeq \lambda h.~}\\
\assh{\inpar{
\Let ph_1,ph_2,P_1,Q_1,R_1,S_1,T_1 \In~\\
P_1 = \lambda h.~ \r{reading}{ph_1}(h) \In\\
Q_1 = \lambda h.~ \r{index[\neg ph_1]}{ph_2}(h) \In\\
R_1 = \lambda h.~ \w{slot[\neg ph_1][\neg ph_2]}{\_}(h) \In\\
S_1 = \lambda h.~ \w{index[\neg ph_1]}{\neg ph_2}(h) \In\\
T_1 = \lambda h.~ \w{latest}{\neg ph_1}(h) \In\\
~~~~[P_1;Q_1;R_1;S_1;T_1]^*(h) \land \None!\reading(h)
}}
}

\newcommand{\profivetwo}{
\assh{Inv2\defeq \lambda h.~}\\
\assh{\inpar{
\Let ph_1',ph_2',P_2,Q_2,R_2,S_2. \In~\\
P_2 = \lambda h.~ (?latest,ph_1')(h) \In\\
Q_2 = \lambda h.~ (!reading,ph_1')(h) \In\\
R_2 = \lambda h.~ (?index[ph_1'],ph_2')(h) \In\\
S_2 = \lambda h.~ (?slot[ph_1'][ph_2'],\_)(h) \In\\
~~~~[P_2;Q_2;R_2;S_2]^*(h) \land \None!\index(h)\\	
~~~~~~~\land \None!\latest(h) \land \None!\slot(h)
}}
}

\newcommand{\MergedHist}{\mathrm{MergedHist}}
\newcommand{\Reader}{\mathrm{Reader}}
\newcommand{\Writer}{\mathrm{Writer}}

\newcommand{\n}[1]{\overline{#1}}
\newcommand{\call}[3]{#3_{#1^{#2}}}

\newcommand{\FigureExV}{
\begin{figure}[h]
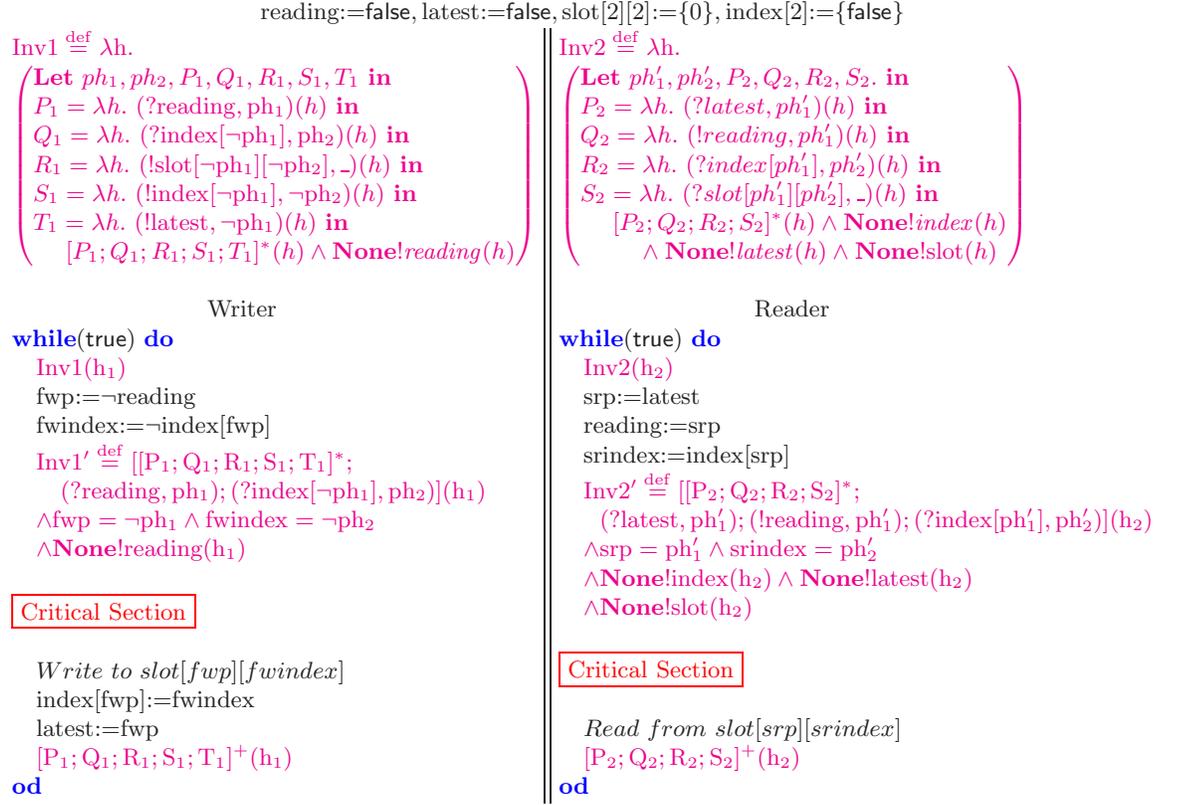

\centering
$\begin{array}{@{~}c@{~}}
  \Assign{reading}{\false}, \Assign{latest}{\false}, \Assign{slot[2][2]}{\set{0}}, \Assign{index[2]}{\set{\false}}
\\
\begin{array}{@{}l@{~}||@{~}l@{}}
\inarr{
\profiveone\\ \\
\qquad \qquad \qquad \qquad \Writer\\
\CWhile{\true}\\
\quad \assh{Inv1(h_1)}\\
\quad \Assign{fwp}{\neg reading}\\
\quad \Assign{fwindex}{\neg index[fwp]}\\ 
\quad \assh{Inv1'\defeq[[P_1;Q_1;R_1;S_1;T_1]^*;}\\
\quad \assh{~~~(?reading,ph_1);(?index[\neg ph_1],ph_2)](h_1)}\\
\quad \assh{\land fwp=\neg ph_1 \land fwindex=\neg ph_2}\\
\quad \assh{\land \None!reading(h_1)}\\ \\
\CS \\ \\
\quad Write\ to\ slot[fwp][fwindex]\\
\quad \Assign{index[fwp]}{fwindex}\\
\quad \Assign{latest}{fwp}\\
\quad \assh{[P_1;Q_1;R_1;S_1;T_1]^+(h_1)}\\
\COd
}
& 
\inarr{
\profivetwo\\ \\
\qquad \qquad \qquad \qquad \Reader\\
\CWhile{\true}\\
\quad \assh{Inv2(h_2)}\\
\quad \Assign{srp}{latest}\\
\quad \Assign{reading}{srp}\\
\quad \Assign{srindex}{index[srp]}\\
\quad \assh{Inv2'\defeq[[P_2;Q_2;R_2;S_2]^*;}\\
\quad \assh{~~(?latest,ph_1');(!reading,ph_1');(?index[ph_1'],ph_2')](h_2)}\\
\quad \assh{\land srp=ph_1' \land srindex=ph_2'}\\
\quad \assh{\land \None!index(h_2) \land \None!latest(h_2)}\\
\quad \assh{\land \None!slot(h_2)}\\ \\
\CS\\ \\
\quad Read\ from\ slot[srp][srindex]\\
\quad \assh{[P_2;Q_2;R_2;S_2]^+(h_2)}\\
\COd
}
\end{array}
\end{array}$
 \caption{Simpson's 4 slot algorithm }
 \label{fig:ex5}
\end{figure}
}

\subsection{Example: Simpson's 4 slot algorithm \cite{Simpson4slot}}
This is a wait-free algorithm for concurrent access of a location from a single reader and a single writer processes. This location is 
simulated using a 2$\times$2 array variable $\slot$. If the
reader is reading the data and the writer wants to write new data at the same time then 
instead of waiting for the reader to complete, the writer writes to a different index of $\slot$ and 
indicates the reader to read from this location in the subsequent read. 
Boolean variables $reading$ and $latest$ represent two indices ($\false$ as 0 and $\true$ as 1) to denote the row and column 
indices of $\slot$ and $\index$ variable. Variable $\index$ is a two element boolean array such that for any $s\in \set{\true,\false}$, 
$\slot[s][index[s]]$ has the latest data written by the writer in the row $\slot[s]$. Variables $fwp$ and $fwindex$ are local to the writer process. 
Variables $srp$ and $srindex$ are local to the reader process. The algorithm with inline assertions 
on history and local variables are shown in Figure \ref{fig:ex5}. In order to simulate different invocations of the writer and the reader 
the respective program is enclosed within the loop. We are interested in proving following two properties of this algorithm.

\FigureExV

\begin{description}
 \item [Interference Freedom]
We want to show that at the entry point of critical sections (for the writer and the reader) $fwp\neq srp$, 
i.e. the reader and the writer use different rows of the $\slot$ variable, and 
$fwp=srp \limp fwindex\neq srindex$, i.e. if both use the same row of the $\slot$ variable then 
they read from and write to different column of that row. 
Therefore, we want to prove 
\[
\begin{array}{l}
 Inv1'(h_1) \land Inv2'(h_2) \land \compat(\false,\false,\set{0},\set{\false},h_1,h_2) \implies \\
~~~fwp\neq srp \lor fwp=srp \limp fwindex\neq srindex
\end{array}
\]
Where $Inv1'$ and $Inv2'$ are the assertions just before the program point where writer is going to write the data and reader is going to read the data.
\begin{proof}
In any compatible merged history $h$ of $h_1$ and $h_2$ the placement of the last write $\w{reading}{ph_1'}$ of $h_2$ has 
two choices with respect to the last read $\r{reading}{ph_1}$ of $h_1$.
\begin{itemize}
 \item \textit{Last write $\w{reading}{ph_1'}$ of $h_2$ is placed \ul{before} the last read $\r{reading}{ph_1}$ of $h_1$:} As writer does not 
write to $reading$ hence $ph_1=ph_1'$, or $\neg fwp=srp$ (from assertions $fwp=\neg ph_1$ and $srp=ph_1'$) and therefore $fwp\neq srp$. Hence proved.

 \item \textit{Last write $\w{reading}{ph_1'}$ of $h_2$ is placed \ul{after} the last read $\r{reading}{ph_1}$ of $h_1$:} This, along with the assumption $fwp=srp$ implies
$\neg ph_1 = ph_1'$. Therefore in $Inv2'$, $\r{index[ph_1']}{ph_2'}$ is same as $\r{index[\neg ph_1]}{ph_2'}$. According to $Inv1'$ 
$\r{index[\neg ph_1]}{ph_2}$ is in $h_1$. We want to establish the relation between $ph_2$ and $ph_2'$.

$Inv1'$ implies that \ul{$S_1\before \r{reading}{ph_1}$} hence there is no write to variable $\index$ beyond $\r{reading}{ph_1}$ in the merged history. 
Hence the value at $index[\neg ph_1]$ is same for both processes, i.e. $ph_2=ph_2'$. From $Inv1'$ and $Inv2'$ we get $fwindex=\neg ph_2$ and $srindex=ph_2$
which implies $fwindex\neq srindex$. Hence proved. 
\end{itemize}
\end{proof}
From the above proof, we can see that the only ordering important in proving this property is $S_1\before P_1$ where $P_1$ is from later iteration than that of $S_1$. Now
we prove another property of interest and find out the program orderings used in that proof.

\item [Consistent Reads]
Second property of interest is related to the order of reads. It specifies that the values read by the reader form a stuttering sequence of values written by the writer, i.e. if the writer writes the sequence 1,2,3,4,5 in subsequent invocations of write then the reader cannot observe 1,4,3,5 as a sequence read. It must observe the sequence which preserve the order of writes and possibly interspersed with the repetition of the same data. First, we define some notations used in this section.
Let $\MergedHist(\Reader^R,\Writer^W)$ be the set of all compatible merged histories consisting of $R$ invocations of the reader process and $W$ invocations of the writer process. 
Let $\Reader(n)$ and $\Writer(n)$ be the $n^{th}$ invocations of the reader and the writer respectively. Let $\Data{k}$ be the data written by the writer in the $k^{th}$ invocation. 
Let $D(w)$ be a sequence $\Data{1}.\Data{2}.\cdots.\Data{w}$ of values written by $w$ consecutive iterations of the writer. For $s\in \set{\true,\false}$, $\n{s}$ denote the negation of $s$.
Let $\call{Reader}{r}{elem}$ be the element $elem\in \set{\r{\any}{\any},\w{\any}{\any}}$ in the merged history from $r^{th}$ invocation of the $\Reader$. Similarly $\call{Writer}{w}{elem}$ denote the same for for the $w^{th}$ iteration of the $\Writer$.

\paragraph{\textbf{Stuttering sequence}}
 Let $\Seq{r,w}$ be a sequence of length $r$, constructed from the elements of $D(w)$. $\Seq{r,w}$ is a stuttering sequence of $D(w)$ if for any index $i$ of the sequence $\Seq{r,w}$ such the $\Seq{r,w}[i-1]=\Data{k_1}$ and $\Seq{r,w}[i]=\Data{k_2}$ then $k_1 \le k_2$. 
 
%

\paragraph{\textbf{Some interesting properties of the Reader and the Writer processes}}
Only the writer writes to $latest$ and reads from $reading$ variable. Further, only the reader writes to $reading$ and reads from $latest$. Also, the value written to $reading$ by the reader in any invocation is same as the value read from $latest$. The value written to $latest$ by the writer in any invocation is negation of the value that it reads from $reading$. 

Following lemma characterizes the sequence of values written to $\reading$ in a segment of the merged history. This characterization is then used in the proof of Lemma \ref{finalaim}. 
\begin{lemma}\label{one}
For all $R$, $W$, $h\in \MergedHist(\Reader^R,\Writer^W)$, $r\le R, w\le W$, $s\in \set{\true,\false}$, if $\Reader(r)$ reads the value of $\latest$ written by $\Writer(w)$ as $s$ then in the sequence of values written to variable $\reading$ between $P_1$ of $\Writer(w)$ and $P_2$ of $\Reader(r)$, $s$ is never followed by $\n{s}$.
\end{lemma}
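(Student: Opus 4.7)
The plan is to argue by contradiction and unwind the obstruction to a strictly smaller instance of itself. Suppose, towards a contradiction, that in the segment between $P_1$ of $\Writer(w)$ and $P_2$ of $\Reader(r)$ two consecutive writes to $\reading$ produce $s$ followed by $\n{s}$. Reader-sequentiality forces these writes to come from iterations $\Reader(r_a)$ and $\Reader(r_a+1)$ for some $r_a$, with $P_2(r_a) \before Q_2(r_a) \before P_2(r_a+1) \before Q_2(r_a+1)$. Since $Q_2$ simply copies the value just read at $P_2$, we get $P_2(r_a)$ reading $\latest = s$ and $P_2(r_a+1)$ reading $\latest = \n{s}$, so some writer event $T_1$ of $\Writer(w_0)$ must have flipped $\latest$ to $\n{s}$ between them.

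I would then pin down $w_0 = w-1$. The value $w_0 = w$ is ruled out because $T_1$ of $\Writer(w)$ writes $s$ by hypothesis; $w_0 > w$ is ruled out because no $T_1$ may lie in $(T_1(w), P_2(r))$ (otherwise $P_2(r)$ would not read from $T_1(w)$); and $w_0 < w-1$ is ruled out because writer-thread order would place $T_1(w_0+1)$ strictly between $T_1(w_0)$ and $P_1(w)$, hence also between $T_1(w_0)$ and $P_2(r_a+1)$, contradicting the maximality of $T_1(w_0)$ as the latest write to $\latest$ seen by $P_2(r_a+1)$. Since $T_1$ stores $\neg P_1$, this forces $P_1(w-1)$ to have read $\reading = s$ while $P_1(w)$ reads $\reading = \n{s}$. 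The flip of $\reading$ in the writer-interval $(P_1(w-1), P_1(w))$ must therefore be caused by some $Q_2$ writing $\n{s}$, and by a symmetric reader-sequentiality argument this $Q_2$ is precisely $Q_2$ of $\Reader(r_a-1)$: any later reader iteration's $Q_2$ would already lie after $P_1(w)$, hence inside the segment and before $Q_2(r_a)$, contradicting our choice of $Q_2(r_a)$ as the first offending write.

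So $Q_2(r_a-1)$ writes $\n{s}$, meaning $P_2(r_a-1)$ read $\latest = \n{s}$, and the pair $(w-1, r_a-1)$ now reproduces the original configuration with strictly smaller indices. The proof then closes by induction on $(w, r_a)$; the base case, at $w=1$ or $r_a=1$, is ruled out because the initial values $\reading = \latest = \false$ cannot support the required flip. The main obstacle I anticipate is the clean formulation of this induction: because writer and reader interleave freely, each inductive step has to re-derive the relative order of the events $P_1, T_1, P_2, Q_2$ rather than rely on any single program order. A likely cleaner route is to strengthen the statement to the event-indexed invariant ``at every event in the segment where $\latest = s$, every subsequent segment-write to $\reading$ writes $s$'' and prove it by induction over events in the merged history, applying the structural observations above at each step.
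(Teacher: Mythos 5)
Your overall strategy --- a proof by contradiction that descends along the writer's iteration index --- is essentially the contrapositive of the paper's induction on $w$, and several of your intermediate deductions are sound: the offending pair must come from consecutive reader iterations $r_a,r_a+1$; the read $P_2$ of $\Reader(r_a+1)$ must take $\latest=\n{s}$ from $T_1$ of $\Writer(w-1)$ (your $w_0=w-1$ argument is correct); and hence $P_1$ of $\Writer(w-1)$ read $\reading=s$ while $P_1$ of $\Writer(w)$ read $\reading=\n{s}$. The gap is in how you close the induction. First, your identification of the flip-causing write as $Q_2$ of $\Reader(r_a-1)$ is not established by the argument you give (the mere existence of segment writes before $Q_2(r_a)$ contradicts nothing about the \emph{first} $s$-then-$\n{s}$ pair), and it is in any case not the fact you need. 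Second, and more seriously, the pair $(w-1,r_a-1)$ does \emph{not} visibly reproduce the lemma's configuration: you have not shown that $\Reader(r_a-1)$ reads $\latest$ from $\Writer(w-1)$ (the relative order of $T_1$ of $\Writer(w-1)$ and $P_2$ of $\Reader(r_a-1)$ is unconstrained, so $P_2(r_a-1)$ may read $\n{s}$ from an earlier writer iteration), and you exhibit no offending $\n{s}$-then-$s$ pair of writes to $\reading$ between $P_1$ of $\Writer(w-1)$ and $P_2$ of $\Reader(r_a-1)$ --- the write $Q_2(r_a-1)$ you point to lies \emph{after} $P_2(r_a-1)$ in reader program order, hence outside that segment. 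As written, the descent does not produce a smaller instance of the same configuration, so the induction on $(w,r_a)$ does not go through.

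The repair is to descend to $(w-1,\,r_a+1)$ instead. You already know $\Reader(r_a+1)$ reads $\latest=\n{s}$ from $\Writer(w-1)$. Let $Q_2(r_b)$ be the last write to $\reading$ preceding $P_1$ of $\Writer(w)$; it exists, writes $\n{s}$, and lies strictly after $P_1$ of $\Writer(w-1)$ (otherwise $P_1(w-1)$ would also read $\n{s}$). Then both $Q_2(r_b)$ (value $\n{s}$) and $Q_2(r_a)$ (value $s$, with $r_b<r_a$ and $Q_2(r_a)\prec P_2(r_a+1)$) lie in the segment between $P_1$ of $\Writer(w-1)$ and $P_2$ of $\Reader(r_a+1)$, giving an offending $\n{s}$-followed-by-$s$ pair there. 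This is exactly the original configuration with $s$ negated and the writer index decreased by one, so a well-founded descent on $w$ alone suffices, terminating at the initial value of $\latest$ as you indicate. With this correction your argument becomes the mirror image of the paper's proof, which runs the induction forwards by chaining $\Reader(r)\rightarrow\Writer(w)\rightarrow\Reader(r')\rightarrow\Writer(n')$ through reads-from edges and applying the hypothesis to $(n',r')$ with the complementary value $\n{s}$.
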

 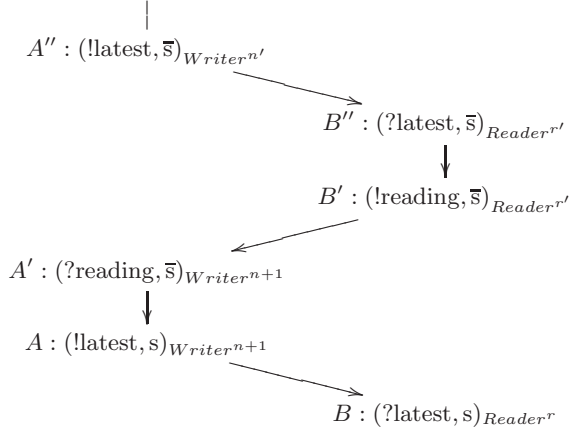
\begin{figure}[h]
\begin{tabular}{@{}c@{~}}
$\xymatrix @R=1pc @C=0.5pc {
\ar@{--}[d]\\
A'':\call{Writer}{n'}{\w{latest}{\n{s}}} \ar[dr]_{}             &  \\
 & B'':\call{Reader}{r'}{\r{latest}{\n{s}}} \ar[d]_{} \\
 & B': \call{Reader}{r'}{\w{reading}{\n{s}}} \ar[dl]_{}\\
A':\call{Writer}{n+1}{\r{reading}{\n{s}}} \ar[d]_{}\\
A:\call{Writer}{n+1}{\w{latest}{s}} \ar[dr]_{}\\
 &B:\call{Reader}{r}{\r{latest}{s}}\\
}$
\end{tabular}
\caption{Merged history for Lemma \ref{one} }
\label{lem:fig1}
\end{figure}

\begin{proof}
We prove it using induction on the iteration number of the writer process. \\
\textbf{Base case, $w=0$:} If $\Reader(r)$ reads the initial value of $\latest$ (in 0$^{th}$ iteration of the writer process), say $s$, then all the invocations of the reader before $\Reader(r)$ also see this initial value of $\latest$ as $s$ and therefore the sequence is made of only this value. Hence the base case satisfies this property.\\
\textbf{Induction Hypothesis, $w\le n$:} For all $w\le n$, if $\Reader(r)$ reads the value of $\latest$ as $s$, written by $\Writer(w)$ then the sequence of values written to $\reading$ between 
$P_1$ of $\Writer(w)$ and $P_2$ of $\Reader(r)$ satisfies this property.\\
\textbf{Induction Step, $w=n+1$:} Consider the merged history of Figure \ref{lem:fig1} where $\Reader(r)$ reads the value of $\latest$(denoted by $B$) from $\Writer(n+1)$ (denoted by $A$). We want to prove this property for the sequence of values written to $\reading$ between $A'$ and $B$ where $A'$ denotes $P_1$ of $\Writer(n+1)$ in Figure \ref{lem:fig1}. \ul{$P_1\before T_1$} implies that $A'$ appears before $A$ in the merged history. Let $\Writer(n+1)$ read the value of $\reading$ from $\Reader(r')$ (denoted by $B'$). \ul{$P_2\before Q_2$} implies that $P_2$ of $\Reader(r')$ appears before $B'$ in the merged history. Let $\Reader(r')$ at $P_2$ read the value of $\latest$ from $\Writer(n')$($A''$). It is clear that $n'<n+1$ hence from the induction hypothesis we know that in the sequence of values written to $\reading$ between $A''$ and $B''$, $\n{s}$ is never followed by $s$. We use this knowledge to characterize the values written to $\latest$ between $B''$ and $A'$. From the writer process we know that the value written to $\latest$ is negation of the value read from $\reading$ variable. Therefore in the sequence of values written to $\latest$ between $B''$ and $A'$, $s$ is never followed by $\n{s}$. Further, we also know that the reader process writes the same value to $\reading$ as is read from the  variable $\latest$. Therefore in the sequence of values written to $\reading$ between $A'$ and $B$, $s$ is never followed by $\n{s}$. Hence proved.
\end{proof}

For any reader we establish the relation between its read of variable $\latest$ and the read of data. More formally we want to say that,
\begin{lemma}\label{finalaim}
 For all $R$, $W$, $h\in \MergedHist(\Reader^R,\Writer^W)$, $r\le R, w\le W$, $s\in \set{\true,\false}$, if $\Reader(r)$ reads the value of $latest$ as $s$ written by $\Writer(w)$ 
 and reads the value of $\index[s]$ written by $\Writer(w')$ then $w'\ge w$ and $\Reader(r)$ reads the data $\Data{w'}$ and 
 all the invocations of the writer from $w$ to $w'$ write the data in the row $s$ of the $\slot$ variable.
\end{lemma}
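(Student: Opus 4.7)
The conclusion decomposes into three claims to be established in a compatible merged history $h$: (a) $w' \geq w$; (b) the value read from $\slot[s][srindex]$ is exactly $\Data{w'}$; and (c) every $\Writer(k)$ with $w \leq k \leq w'$ has $fwp_k = s$ and hence writes into row $s$. My plan is to deduce (a) and (b) directly from program orderings inside the invariants together with the reads-from constraints of $h$, and then to attack (c) via Lemma~\ref{one}.

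For (a), $\Writer(w)$ writes $\index[s]$ at $S_1$ and then writes $s$ to $\latest$ at $T_1$, with $S_1 \before T_1$. The reads-from assumption for $\latest$ places $T_1$ of $\Writer(w)$ before $P_2$ of $\Reader(r)$ in $h$; combined with $P_2 \before R_2$ inside $\Reader(r)$, this puts $\Writer(w)$'s write of $\index[s]$ before $\Reader(r)$'s read at $R_2$, and since $\Writer$'s iterations are totally ordered, the iteration whose $\index[s]$-write is actually observed must have index $w' \geq w$. For (b), the reads-from assumption for $\index[s]$ gives $srindex = fwindex_{w'}$, and inside $\Writer(w')$ the order $R_1 \before S_1$ guarantees $\Data{w'}$ is placed in $\slot[s][fwindex_{w'}]$ before $S_1$. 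I would rule out any overwrite by a later $\Writer(k)$, $k > w'$: such a writer touches $\slot[s][\cdot]$ only if $fwp_k = s$, in which case $fwindex_k$ is the negation of the $\index[s]$-value it sees at $Q_1$; since the reads-from assumption for $\index[s]$ forbids any intervening writer from modifying $\index[s]$, $\Writer(k)$ reads $fwindex_{w'}$ and therefore writes to the opposite column, leaving the target cell intact.

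Part (c) is the main obstacle. Both $\Writer(w)$ and $\Writer(w')$ have $fwp = s$ and therefore read $\n{s}$ from $\reading$ at their respective $P_1$. If some intermediate $\Writer(k)$ with $w < k < w'$ had $fwp_k = \n{s}$, it would observe $\reading = s$ at $P_1(k)$; applying Lemma~\ref{one} to the window $[P_1(\Writer(w)),P_2(\Reader(r))]$---inside which the sequence of writes to $\reading$ never has $s$ followed by $\n{s}$---would force every subsequent in-window read of $\reading$ also to see $s$, contradicting $fwp_{w'} = s$. The subtlety is that the invariants only guarantee $S_1$ of $\Writer(w')$ before $R_2$ of $\Reader(r)$ and $P_2$ of $\Reader(r)$ before $T_1$ of $\Writer(w')$, so $P_1$ of $\Writer(w')$ need not lie inside the window. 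I expect the hard part to be the case where $P_1(\Writer(w'))$ falls after $Q_2(\Reader(r))$: there $\Writer(w')$'s observation of $\n{s}$ must be explained by a later $\Reader(r'' > r)$ writing $\n{s}$ to $\reading$, which itself traces back to some $\Writer(w'')$ with $fwp_{w''} = \n{s}$ writing $\n{s}$ to $\latest$. I plan to close this subcase by a nested induction on the writer iteration mirroring the structure of Lemma~\ref{one}'s proof, or alternatively by strengthening Lemma~\ref{one} to a slightly wider window so that the monotonicity argument applies uniformly to $P_1(\Writer(w'))$.
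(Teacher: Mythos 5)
Your decomposition into the three claims, and your arguments for $w'\ge w$ and for the ``all writers from $w$ to $w'$ use row $s$'' part via Lemma~\ref{one}, coincide with the paper's proof (for the latter you even flag a window subtlety about where $P_1$ of $\Writer(w')$ falls that the paper itself glosses over). The genuine gap is in claim (b), the overwrite-exclusion step. The reads-from assumption for $\index[s]$ only forbids writes to $\index[s]$ between $S_1$ of $\Writer(w')$ and $R_2$ of $\Reader(r)$; but the data is read at $S_2$ of $\Reader(r)$, and a writer whose $Q_1$ falls after $R_2$ is not constrained by that assumption at all. Concretely, $\Writer(w'+1)$ may legitimately have $fwp=s$ provided its $S_1$ (the write to $\index[s]$) lands after $R_2$; it then writes $\slot[s][\n{srindex}]$ (harmless) but flips $\index[s]$. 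Your argument would then permit a hypothetical $\Writer(w'+2)$ with $fwp=s$ to read the flipped value at its $Q_1$ and write into column $srindex$ of row $s$ before $S_2$ of $\Reader(r)$ --- clobbering exactly the cell being read. Asserting that every such $\Writer(k)$ ``reads $fwindex_{w'}$'' over-extends the reads-from constraint past $R_2$.

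The paper closes this with an argument you omit: by $Q_2\before R_2$, $\Reader(r)$ has already written $s$ to $\reading$ before $R_2$, and by $S_2\before Q_2$ of the next reader iteration there is no further write to $\reading$ until after $S_2$; moreover $S_1\before P_1$ (of the next writer iteration) forces every writer from $w'+2$ onward to begin after $R_2$. Hence every such writer reads $\reading=s$ at its $P_1$, takes $fwp=\n{s}$, and writes only into row $\n{s}$, so $\slot[s][srindex]$ is untouched between $R_2$ and $S_2$. You need this $\reading$-based exclusion (or something equivalent) to finish claim (b); note also that the orderings $Q_2\before R_2$ and $S_2\before Q_2$ on which it rests are part of the sufficient orderings the paper extracts for fence placement, so an argument that avoids them would actually change the paper's conclusions.
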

\begin{proof}
Let us assume that $\Reader(r)$ reads the value of $\latest$ as $s$ written by $\Writer(w)$ which means that $P_2$ of $\Reader(r)$ appears after $T_1$ of $\Writer(w)$ and no write to $\latest$ appears in between these two points in the merged history. As $\Writer(w)$ also writes to $\index[s]$ and \ul{$S_1\before T_1$} and \ul{$P_2\before R_2$}, hence $\Reader(r)$ reads the value of $\index[s]$ either from $\Writer(w)$ or from $\Writer(w')$ such that $w'\ge w$. We prove the following two properties,
\begin{itemize}
 \item \textbf{$\mathbf{\Reader(r)}$ reads the data written by $\Writer(w')$ in $\slot[s][k]$ where $k$ is the value written to $\index[s]$ by $\Writer(w')$} \\
\emph{Proof:} If $\Reader(r)$ reads $\index[s]$ as $k\in \set{\true,\false}$ from $\Writer(w')$ then \ul{$R_1\before S_1$} implies that $\Writer(w')$ has also written the data in $\slot[s][k]$. We want to show that 
no subsequent invocation of the writer writes in $\slot[s][k]$ before $S_2$ of $\Reader(r)$. From the assumption that $R_2$ of $\Reader(r)$ reads the value of $\index[s]$ from $\Writer(w')$ implies that there is no write to $\index[s]$ between $S_1$ of $\Writer(w')$ and $R_2$ of $\Reader(r)$. \ul{$S_1 \before P_1$}, where $P_1$ is from later iterations than 
that of $S_1$, in $Inv1'$ implies that the writer can be invoked at most once after the iteration $w'$ and before $R_2$ of $\Reader(r)$. If invoked more than once it results in the write of $\index[s]$ to appear after $S_1$ of $\Writer(w')$ and before $R_2$ of $\Reader(r)$, contradicting our assumption. Further, if the next invocation after $w'$ happens then it writes the data to column $\n{k}$ of row $s$ in $\slot$ variable still satisfying the property. If any further invocation of the writer happens after $R_2$ and before $S_2$ of $\Reader(r)$ then because of \ul{$Q_2\before R_2$} it observes the 
value of $\reading$ as $s$ and therefore writes the data to $\n{s}$ row of the $\slot$ variable. $Inv2'$ implies that \ul{$S_2$ of $\Reader(r) \before Q_2$} of subsequent invocations of the reader and therefore no write to $\reading$ exists between $R_2$ and $S_2$ of $\Reader(r)$. This results in observing the same value of $\reading$, $s$, and subsequently writing the data to row $\n{s}$ by all invocations of the writer between $R_2$ and $S_2$ of $\Reader(r)$.

\item \textbf{All the invocations of the writer from $w$ to $w'$ write the data in the row $s$ of the $\slot$ variable}\\
\emph{Proof:} We prove an alternate equivalent property; All invocations of the writer from $w$ to $w'$ read the value of $\reading$ as $\n{s}$. This follows from the property that the 
writer writes in that row of the $\slot$ variable that is obtained by negating the value of the variable $\reading$. From the assumption, $\Writer(w')$ writes to $\index[s]$ and therefore it also reads the value of $\reading$ as $\n{s}$ and the same holds for $\Writer(w)$ as well. We need to show that this property holds for the rest of the invocations in between $w$ and $w'$. We prove it by contradiction by assuming that there exists a $w''$ such that $w<w''<w'$ and $P_1$ of $\Writer(w'')$ reads the value of $\reading$ as $s$. It is clear that $P_1$ of $\Writer(w'')$ appears after $P_1$ of $\Writer(w)$ and before $P_1$ of $\Writer(w')$. From the earlier argument we know that the value of $\reading$ visible at $P_1$ of $\Writer(w)$ is $\n{s}$. Hence, $P_1$ of $\Writer(w'')$ cannot read from the value of $\reading$ visible at $P_1$ of $\Writer(w)$. Therefore, it must read the value of $\reading$ as $s$ from the sequence of values written to $\reading$ between $P_1$ of $\Writer(w)$ and $P_2$ of $\Reader(r)$. 
Following Lemma \ref{one}, $s$ can never be followed by $\n{s}$ in the sequence of values written to $\reading$ in between $P_1$ of $\Writer(w)$ and $P_2$ of $\Reader(r)$. 
It further implies that $P_1$ of $\Writer(w')$ cannot read the value of $\reading$ as $\n{s}$, a contradiction to our assumption. Therefore all invocations of the writer from $w$ to $w'$ read the value of $\reading$ as $\n{s}$. Hence proved.
\end{itemize}

\end{proof}

Now we use Lemma \ref{finalaim} to prove the following theorem.
\begin{theorem}[Stuttering Sequence]
Stutter(n)$\defeq$ For all $n,w \in \mathbb{N},h\in \MergedHist(\Reader^n,\Writer^w)$ the sequence
$\Seq{n,w}$, constructed from the elements of $D(w)$, is a stuttering sequence of $D(w)$.
\end{theorem}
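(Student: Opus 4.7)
The plan is to reduce the theorem to a pairwise statement and settle each consecutive pair using Lemma \ref{finalaim} together with a short monotonicity observation on the writer iterations whose values are observed by successive readers. Since $\Seq{n,w}$ is stuttering exactly when every pair of consecutive entries $\Seq{n,w}[i{-}1] = \Data{k_1}$ and $\Seq{n,w}[i] = \Data{k_2}$ satisfies $k_1 \le k_2$, it suffices to show that for any consecutive invocations $\Reader(i{-}1)$ and $\Reader(i)$ reading $\Data{k_1}$ and $\Data{k_2}$ respectively, $k_1 \le k_2$. Applying Lemma \ref{finalaim} to each of these readers produces $(w_1, w_1', s_1)$ and $(w_2, w_2', s_2)$ with $w_j \le w_j'$, $k_j = w_j'$, and every $\Writer$ invocation in the range $[w_j, w_j']$ writing in row $s_j$ of $\slot$.

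The key preliminary fact I would prove is the monotonicity $w_1 \le w_2$: because $P_2$ of $\Reader(i{-}1)$ appears before $P_2$ of $\Reader(i)$ in any compatible merged history (the reader loop is sequential, as encoded by $Inv2$), and writes to $\latest$ occur only at $T_1$ and appear in writer-iteration order (sequentiality of the writer loop, as encoded by $Inv1$), the $\Writer(w_2)$ whose $T_1$ is observed by $\Reader(i)$ has $w_2 \ge w_1$. I would then case split on whether $w_1' \le w_2$. If $w_1' \le w_2$, then $k_1 = w_1' \le w_2 \le w_2' = k_2$ directly. Otherwise $w_2 < w_1'$, and combined with $w_1 \le w_2$ this places $\Writer(w_2)$ inside $[w_1, w_1']$; by the guarantee Lemma \ref{finalaim} gives for $\Reader(i{-}1)$, $\Writer(w_2)$ writes in row $s_1$. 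Since the writer code forces $\latest := fwp$ where $fwp$ is the row just written to, the value $s_2$ that $\Reader(i)$ observes from $\Writer(w_2)$'s $T_1$ must equal $s_1$. Consequently both readers read the same variable $\index[s_1]$, and the analogous monotonicity argument applied to writes of $\index[s_1]$ (they occur in writer-iteration order and $R_2$ of $\Reader(i)$ follows $R_2$ of $\Reader(i{-}1)$) yields $w_2' \ge w_1'$, hence $k_2 \ge k_1$.

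The main obstacle will be pinning down the two monotonicity claims precisely from the $\compat$ predicate and the loop invariants: namely, that a later read of a shared variable $\shvar$ by one process, against a sequence of writes to $\shvar$ from another process whose history is ordered by its loop structure, must observe a write from a writer iteration no smaller than any earlier read did. This is intuitive from the sequential semantics of each thread, but extracting it cleanly from the merged-history machinery, and making sure that $T_1$ of $\Writer(w) \before T_1$ of $\Writer(w{+}1)$ and $P_2$ of $\Reader(i{-}1) \before P_2$ of $\Reader(i)$ are forced by $Inv1$ and $Inv2$ and survive every compatible interleaving, is the delicate bookkeeping step I expect to spend the most time on.
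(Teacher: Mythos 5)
Your proof is correct, and its core coincides with the paper's: compare consecutive reader invocations, apply Lemma \ref{finalaim} to each, and combine it with the monotonicity of successive reads of $\latest$ (and of $\index[s]$) against the writer's totally ordered sequence of writes. The organizational differences are real, though, and mostly in your favour. First, you drop the induction entirely: since the stuttering property is by definition a condition on consecutive pairs of $\Seq{n,w}$, the pairwise claim $k_1\le k_2$ is all that is needed, and the paper's induction hypothesis does no substantive work beyond restating that the prefix was already handled. Second, you case-split on whether $w_1'\le w_2$ rather than on whether $\Reader(i)$ sees $s$ or $\n{s}$ in $\latest$. Your split makes the ``readers land in disjoint writer ranges'' case immediate ($k_1=w_1'\le w_2\le w_2'=k_2$) and isolates all the work in the overlap case, where $w_1\le w_2<w_1'$ forces $\Writer(w_2)$ into the range $[w_1,w_1']$, hence row $s_1$, hence $s_2=s_1$, hence the same $\index[s_1]$ variable and $w_2'\ge w_1'$ by read monotonicity. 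The paper instead has to argue in its $\n{s}$ case that $w_b>w'$ because $\Writer(w_b)$ writes a different row than everything in $[w_a,w']$ --- the same fact approached from the other side. Both proofs lean, without full justification, on the two monotonicity facts you flag as the delicate step (writes to $\latest$ and $\index[s]$ appear in the merged history in writer-iteration order, and a later $P_2$ or $R_2$ observes a write no earlier than an earlier one does); the paper asserts these in one sentence, so your explicit acknowledgement that they must be extracted from $\compat$ and the loop structure of $Inv1$/$Inv2$ is, if anything, more honest about where the remaining bookkeeping lies.
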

\begin{proof}
 \textbf{Base case, $n=0$}: With no history corresponding to the Reader in the merged history, the empty sequence trivially forms a stuttering sequence.\\
 \textbf{Induction Hypothesis}: For all $r \le n$ Stutter(r) holds true.\\
 \textbf{Induction Step, $r=n+1$}: 

Let $\Reader(n)$ read the value of $\latest$ from $\Writer(w_a)$.
From Lemma \ref{finalaim}, we know that $Reader(n)$
reads the value $\Data{w'}$ such that $w_a \le w' \le w$ and all the invocations of the writer from $w_a$ to $w'$ write the data in the same row $s$ of the $\slot$ variable.
Further, the value of $index[s]$ visible at $R_2$ of $\Reader(n)$ is from $S_1$ of $\Writer(w')$. We know that the write to $latest$ from consecutive iterations are totally ordered and 
the same holds for consecutive reads from $latest$ as well.  
Therefore, $\Reader(n+1)$ reads the value of $latest$ from some $\Writer(w_b)$ such that $w_b \ge w_a$. There are two possibilities based on whether this value is $s$ or $\n{s}$.
\begin{itemize}
\item $\Reader(n+1)$ reads the value of $latest$ as $s$ from the $\Writer(w_b)$ such that $w_b \ge w_a$: From our assumption we know that $\Reader(n)$ sees the value of $index[s]$ from 
$\Writer(w')$ therefore the value of $index[s]$ visible at $R_2$ of $\Reader(n+1)$ will be from some $w''$ such that $w\ge w''\ge w'$. 
Following Lemma \ref{finalaim},  the data read by 
$\Reader(n+1)$ from $slot[s][index[s]]$ will be from $\Writer(w'')$ and $w''\ge w'$ implies that the resulting sequence $\Seq{n+1,w}$ is still a stuttering sequence.

\item  $\Reader(n+1)$ reads the value of $latest$ as $\n{s}$ from $\Writer(w_b)$ such that $w_b \ge w_a$: From Lemma \ref{finalaim}, we know that all the invocations of the writer from 
$w_a$ to $w'$ write the data in the same row $s$ of $\slot$ and because $\Writer(w_b)$ writes the data in row $\n{s}$ of $\slot$ therefore $w_b > w'$. Following Lemma \ref{finalaim}, $Reader(n+1)$ reads the data from some $\Writer(w'')$ such that $w''\ge w_b$. Combining these two we get $w''>w'$ such that $\Reader(n)$ reads $\Data{w'}$ and $\Reader(n+1)$ reads $\Data{w''}$. Therefore the 
resulting sequence $\Seq{n+1,s}$ is still a stuttering sequence.
\end{itemize}

\end{proof}

\end{description}
\paragraph{\textbf{Simpson's 4 slot algorithm under PSO memory model}}
Following per-thread instruction orderings are used in the proofs of the interference freedom and the consistent reads properties.
\begin{itemize}
 \item $P_1\before T_1$(from the proof of Lemma \ref{one}), $S_1 \before T_1$, $R_1\before S_1$
 \item $P_2\before Q_2$(from the proof of Lemma \ref{one}), $P_2\before R_2$, $Q_2\before R_2$
 \item $S_1\before P_1$, where $P_1$ is from iteration later than that of $S_1$
 \item $S_2 \before Q_2$, where $Q_2$ is from iteration later than that of $S_2$.
\end{itemize}
Out of all these orderings, $P_1\before T_1$, $P_2\before Q_2$ and $P_2\before R_2$ are data dependent orders which are respected by the PSO memory model. 
$S_2\before Q_2$ where $Q_2$ is from iteration later than that of $S_2$ is also respected by PSO memory model because $S_2$ corresponds to read and $Q_2$ 
corresponds to write instruction. Therefore, we have to enforce only $R_1\before S_1$, $S_1\before T_1$, $Q_2\before R_2$ and $S_1\before P_1$ where $P_1$ is 
from iteration later than that of $S_1$. Following the semantics of $\fence$ it is sufficient to put two fence instructions in the writer; one between $R_1$ and $S_1$ 
and another between $S_1$ and $T_1$. Further, we need one fence instruction between $Q_2$ and $R_2$ in the reader as well. 

\section{Conclusion and Future Work}
\label{sec:conclusion}
In this paper we proved Simpson's 4 slot algorithm correct under the 
SC memory model with respect to the interference freedom and the consistent reads properties. Based on these proofs, we identified the locations of the
$\fence$ instructions needed to satisfy these two properties under the PSO memory model. 
As a direction for future work
we still have to explore the use of this approach for advanced memory models which support non-atomic writes (POWER/ARM). 
This paper introduces the predicates over history variable only to carry out proofs conveniently. No formal treatment is 
given to them and this should be addressed with high priority. Even in presence of predicates over history variables, the difficulty in carrying out these proofs 
without any tool support is evident from the proof of Simpson's 4 slot algorithm. 
We plan to use proof assistants for this in near future. In all the examples we 
considered here, a program is always executed by only one thread. This restriction 
needs to be addressed in order to handle concurrent data structures where many threads can execute the 
same method of an object.
\bibliographystyle{alpha}
\bibliography{paper}
\end{document}